\documentclass[11pt,a4paper,reqno]{amsart}
\usepackage{amsmath}
\usepackage{amsfonts}
\usepackage{amssymb}
\usepackage{amsthm}
\usepackage{newlfont}
\usepackage{a4}
\usepackage{enumerate}
\usepackage[pdftex]{graphicx,color}

\theoremstyle{definition}
\newtheorem{defn}{Definition}[section]

\newtheorem{tvr}[defn]{Proposition}

\theoremstyle{remark}

\usepackage[centertags]{amsmath}
\usepackage{graphicx}
\usepackage{amsfonts}
\usepackage{amssymb}
\usepackage{amsthm}
\usepackage{bbm}
\usepackage{newlfont}
\usepackage{a4}
\usepackage{enumerate}

\newlength{\defbaselineskip}
\setlength{\defbaselineskip}{\baselineskip}
\newcommand{\setlinespacing}[1]%
           {\setlength{\baselineskip}{#1 \defbaselineskip}}

%
%
\newcommand{\map}{\rightarrow}
\newcommand{\q}{\quad}
\renewcommand{\epsilon}{\varepsilon}
\renewcommand{\i}{\mathrm{i}}

\newcommand{\la}{\lambda}

\renewcommand{\rho}{\varrho}
\renewcommand{\phi}{\varphi}

\newcommand{\R}{{\mathbb{R}}}

\newcommand{\Com}{{\mathbb C}}

\newcommand{\Z}{\mathbb{Z}}

\newcommand{\set}[2]{\left\{#1  \mid #2 \right\}}

\newcommand{\abs}[1]{\left\vert#1\right\vert}
\newcommand{\wt}{\widetilde}

\newcommand{\setcomb}[2]{
\left\{
\begin{smallmatrix}
#1 \\ #2
\end{smallmatrix}
 \right\}  }

\addtolength{\topmargin}{-25pt}
\addtolength{\textwidth}{115pt}
\addtolength{\textheight}{90pt}
\addtolength{\oddsidemargin}{-55pt}
\addtolength{\evensidemargin}{-65pt}
\begin{document}

\title[Two-dimensional (anti)symmetric exponential functions]
{Two-dimensional symmetric and antisymmetric generalizations
 of exponential and cosine functions}

\author{Ji\v{r}\'{i} Hrivn\'{a}k$^{1,2}$}
\author{Ji\v{r}\'{i} Patera$^1$}

\date{\today}
\begin{abstract}\

Properties of the four families of recently introduced special functions
of two real variables, denoted here by $E^\pm$, and $\cos^\pm$, are
studied. The superscripts $^+$ and $^-$ refer to the symmetric and
antisymmetric functions respectively. The functions are considered in all
details required for their exploitation in Fourier expansions of digital
data, sampled on square grids of any density and for general position of
the grid in the real plane relative to the lattice defined by the
underlying group theory. Quality of continuous interpolation, resulting
from the discrete expansions, is studied, exemplified and compared for
some model functions.
\end{abstract}\

\maketitle
\noindent
$^1$ Centre de recherches math\'ematiques,
         Universit\'e de Montr\'eal,
         C.~P.~6128 -- Centre ville,
         Montr\'eal, H3C\,3J7, Qu\'ebec, Canada; patera@crm.umontreal.ca\\
$^2$ Department of physics,
Faculty of nuclear sciences and physical engineering, Czech
Technical University, B\v{r}ehov\'a~7, 115 19 Prague 1, Czech
republic; jiri.hrivnak@fjfi.cvut.cz

\section{Introduction}

A large body of empirical evidence as well as theoretical experience has been accumulated in treating two-dimensional digital data owing to the large amount of data requiring treatment in practical contexts. We  approach the problem from the opposite direction. Our departure point, rather than a set of specific $2D$ data, are properties of new special functions in $n$ dimensions \cite{KPexp,KPtrig}. This presents certain advantages and disadvantages, some of which are listed in the concluding remarks. The most important distinction is undoubtedly the possibility to carry out any analysis of lattice data in the Fourier space rather than in the data space.

In this paper, common cosine transforms in one and two dimensions \cite{Strang,mart} are based on the symmetric group $S_2$ and $S_2\times S_2$ respectively. Our approach to $2D$ problems is built on group $S_3$ and on its affine extension. The special functions differ essentially from the product of trigonometric functions each depending on a single variable extending in mutually orthogonal directions in the case of $S_2\times S_2$. We start with the properties of the special functions of $S_n$  \cite{KPexp,KPtrig}.

Our aim here is to take advantage of the fact that, when working in $2D$, many specific properties of functions can be expressed explicitly and more detailed questions can be answered than in the case of the general dimension \cite{KPexp,KPtrig}. More precisely, we consider two families of symmetric and antisymmetric special functions of two real variables. They can be viewed as generalizations of the common exponential and cosine functions of one variable.

For each family, we study:
\newline
(i) The orthogonality of the functions on a lattice of chosen density within an appropriate finite region of the real Euclidean space $\R^2$, and the corresponding finite Fourier expansions of digital data functions sampled on such lattices.
\newline
(ii) The interpolation technique of the digital data used for the expansions into a continuous differentiable function. Interpolation quality is assessed as a function of the density of the lattice. Cosine transforms of types I, II, III, and IV \cite{Strang} can be generalized by situating the points of the sampling grid in a general position within the orthogonality region. It is important to point out those versions of the formalism by which the undesirable Gibbs effect can be avoided \cite{Ustina}.

The definition of the symmetric and antisymmetric exponential functions $E^+$ and $E^-$ is analogous to the definition of so called $C-$ and $S-$functions, where instead of the symmetric group $S_n$, the Weyl group $W$ of some compact semisimple Lie groups is used~\cite{KP1,KP2}. So called $E-$functions~\cite{KP3} which are based on even subgroup of the Weyl group $W$ are analogous to 'alternating' exponential functions~\cite{KPalte,KPalt}. In some cases, a direct relation between these two concepts can be found~\cite{NPT}. The coincidences of the orbit functions are due to the known isomorphism
of the symmetric group $S_n$ and the Weyl group of the Lie group $SU(n)$,
and of the alternating subgroup of $S_n$ and the even subgroup of the Weyl
group.

In Section~2, the antisymmetric and symmetric 2-dimensional functions are defined respectively as $2\times2$ determinants and permanents \cite{minc} of exponential functions of one variable. Properties of the antisymmetric functions (determinants) are considered first, namely their continuous and discrete orthogonality, the discrete Fourier transforms, and others. An exposition of analogous properties of the symmetric functions (permanents) follows. Examples of real and imaginary parts of the functions are shown in Fig.~1 and 2.

Section~3 is devoted to $2D$ interpolation by symmetric and antisymmetric exponential functions. General $2D$ interpolations are recalled, followed by a detailed study of interpolation by antisymmetric functions. Examples of the interpolation of the model function in Fig.~3 are shown in Fig.~4 for several densities of the sampling grid. Interpolation properties of the symmetric exponential function follow. Examples of the interpolation of the same model function are shown in Fig.~5. Interpolation errors are summarized in Table~1.

Section~4 parallels Sec,~2 and 3, except that the exponential functions are replaced by $2\times2$ antisymmetric (determinants) and symmetric (permanents) of cosine functions. Examples of antisymmetric and symmetric cosine functions are shown respectively in Fig.~6 and 8; interpolation examples can be found in Fig.~7 and 9. Interpolation quality in the above examples is compared in Table~1. The difference between the model function and its interpolations is integrated over the whole orthogonality region.

The last example deals with known undesirable effect of Fourier interpolation, referred to as the Gibbs effect \cite{Ustina}. In particular, it is shown that the Gibbs effect is absent in two versions of our $2D$ cosine transforms, as illustrated in Fig.~10.

The last section contains various related comments and remarks.

\medskip

\section{Two--dimensional (anti)symmetric exponential functions}
\subsection{Two--dimensional antisymmetric exponential functions}
\subsubsection{Definitions, symmetries and general properties}\

Two-dimensional antisymmetric exponential functions $E^-_{(\lambda,\mu)}:\R^2\map \Com$ have for $\lambda,\mu\in \R$ the following explicit form
\begin{align}
 E^-_{(\lambda,\mu)}(x,y)
     &=\left|\begin{smallmatrix}
     e^{2\pi i\lambda x}&e^{2\pi i\lambda y}\\
     e^{2\pi i\mu x}&e^{2\pi i\mu y}\\
     \end{smallmatrix}\right|
     =e^{2\pi i(\lambda x+\mu  y)}
     -e^{2\pi i(\lambda  y+\mu x)}
\end{align}

We observe that $E^-_{(\lambda,\lambda)}(x,y)=0$ and $E^-_{(\lambda,\mu)}(x,x)=0$. From the explicit formula we immediately obtain antisymmetry of $E^-_{(\lambda,\mu)}(x,y)$ with respect to the permutation of variables~$(x,y)$
\begin{equation}\label{expant}
 E^-_{(\lambda,\mu)}(y,x)=-E^-_{(\lambda,\mu)}(x,y).
\end{equation}
and, moreover, with respect to permutation of $(\la,\mu)$
\begin{equation}\label{expant2}
E^-_{(\lambda,\mu)}(x,y)= -E^-_{(\mu,\lambda)}(x,y).
\end{equation}
Therefore, we consider only such $E^-_{(\lambda,\mu)}$ with so called {\bf strictly dominant} $(\lambda,\mu)$, that is, pairs $(\lambda,\mu)$ with $\la>\mu$.

The functions $E^-_{(k,l)}$ with $k,l\in\Z$ have additional symmetries related to the periodicity of exponential function. One can verify directly that
\begin{equation}\label{expper}
    E^-_{(k,l)}(x+r,y+s)= E^-_{(k,l)}(x,y),\q r,s\in \Z
     \end{equation}

The relations (\ref{expant}) and (\ref{expper}) imply that it is sufficient to consider the functions $E^-_{(k,l)},\,k,l\in\Z$ on the so called {\bf fundamental domain} $F(S_2^{\mathrm{aff}})$~\cite{KPexp}. The following open 'triangle' can be chosen as the fundamental domain in dimension two,
\begin{equation}\label{fund}
F(S_2^{\mathrm{aff}})= \set{(x,y)\in(0,1)\times(0,1)}{
x>y}.
\end{equation}

We also have the following property for $a\in \R$
\begin{equation}\label{aexpshift}
 E^-_{(k,l)}(x+a,y+a)= e^{2\pi i(k+l)a}E^-_{(k,l)}(x,y).
\end{equation}

\subsubsection{Continuous orthogonality}\

The functions $E^-$ are mutually orthogonal on the fundamental domain, i.e.
\begin{equation*}
    \int_{F(S_2^{\mathrm{aff}})} E^-_{(k,l)}(x,y)
\overline{E^-_{(k',l')}(x,y)}\,dx dy=
\delta_{kk'}\delta_{ll'}, \q k,l,k',l' \in\Z,\, k>l,k'>l'.  \end{equation*}
Therefore, every function $f:\R^2\map \Com$ that is antisymmetric $f(x,y)=-f(y,x)$ and periodic $f(x+r,y+s)= f(x,y),\, r,s\in \Z$ and has continuous derivatives can be expanded in the antisymmetric exponential functions $E^-_{(k,l)}$:
\begin{equation}
f(x,y)=\sum_{\setcomb{k,l\in \Z}{k>l}} {\wt c}_{kl} E^-_{(k,l)}(x,y)
,\q
{\wt c}_{kl} = \int_{F(S^{\mathrm{aff}}_2)} f(x,y)
\overline{E^-_{(k,l)}(x,y)}\,dx\, dy.
\end{equation}

\subsubsection{Solutions of the Laplace equation}\

The functions $E^-$ are solutions of the Laplace equation
\begin{equation}\label{laplacem}
\left(\frac{\partial^2}{\partial x^2}+\frac{\partial^2}{\partial y^2}\right)E^-_{(k,l)}(x,y)=-4\pi^2(k^2+l^2)E^-_{(k,l)}(x,y)
\end{equation}
and, moreover, of the equation
\begin{equation*}
\frac{\partial^2}{\partial x^2}\frac{\partial^2}{\partial y^2}E^-_{(k,l)}(x,y)=16\pi^4k^2l^2E^-_{(k,l)}(x,y),
\end{equation*}
which is algebraically independent. The functions $E^-$ satisfy the condition $E^-_{(k,l)}(x,y)=0$ on the boundary $x=y$.

\subsubsection{Discrete orthogonality}\

Discrete orthogonality of antisymmetric exponential functions over the grid of form $(x_m,y_n)=(m/N,n/N),\, m,n\in \{0,\dots,N-1\}$, $m>n$ was proved in~\cite{KPexp}. The positive integer $N$ sets the density of the grid inside $F(S_2^{\mathrm{aff}})$. For applications, it may be convenient to consider orthogonality over more a general type of grid. In addition to the parameter $N$, we parameterize the grid by parameters $a\in \R$ and $b\in [0,1]$. The equidistant $N(N-1)/2$--point grid $L^-_{a,b,N,1}$ is given by
$$L^-_{a,b,N,1}=\left\{(x_m,y_n)\,|\,m>n,\,m,n=0\dots N-1  \right \}$$
where
\begin{equation}\label{xmyn}
    (x_m,y_n)=\left( a+\frac{m+b}{N},a+\frac{n+b}{N}  \right).
\end{equation}
Using the property (\ref{aexpshift}), we observe that the orthogonality relations from~\cite{KPexp} also hold over the grid $L^-_{a,b,N,1}$:
\begin{equation}\label{adortho}
\sum_{\setcomb{m,n=0}{m>n}}^{N-1} E^-_{(k,l)}(x_m,y_n)\overline{E^-_{(k',l')}(x_m,y_n)}=N^2\delta_{kk'}\delta_{ll'},
\end{equation}
where $k,l,k',l'\in \{0,\dots,N-1\}, k>l,k'>l'$. The set of functions $E^-_{(k,l)}$ with $k,l\in \{0,\dots,3\}, k>l$ of mutually orthogonal functions on the grid $L^-_{a,b,4,1}$ is depicted in Figure \ref{FEm}.

Note that for $N$ odd, $N=2M+1$, the discrete orthogonality relations (\ref{adortho}) also hold for the set of functions $E^-_{(k,l)}$ with $k,l\in \{-M,\dots,M\}, k>l$.
 \begin{figure}[!ht]
\resizebox{2.4cm}{!}{\input{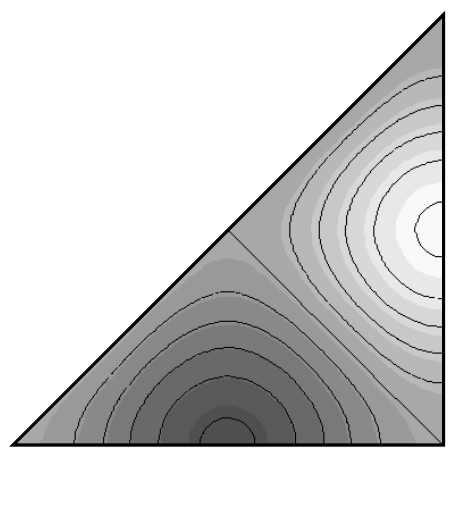_t}}\hspace{22pt}
\resizebox{2.4cm}{!}{\input{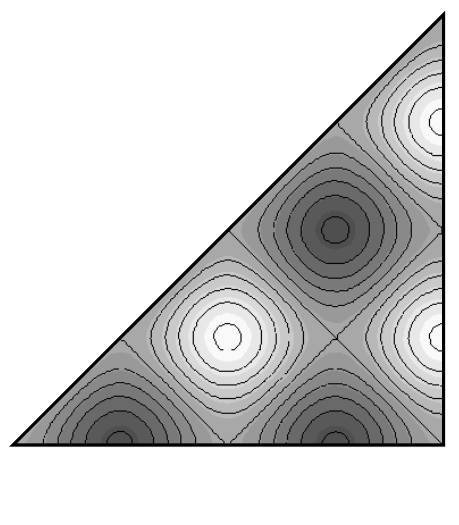_t}}\hspace{22pt}
\resizebox{2.4cm}{!}{\input{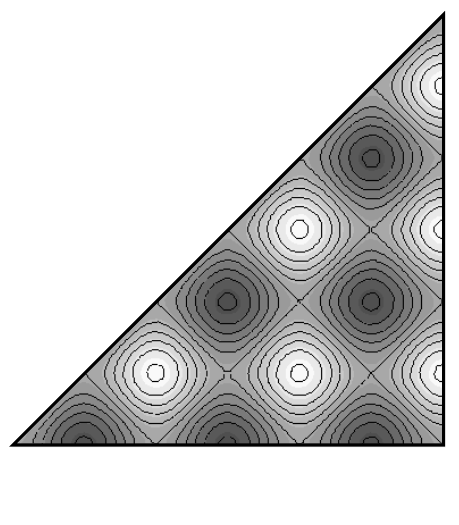_t}}
\resizebox{2.4cm}{!}{\input{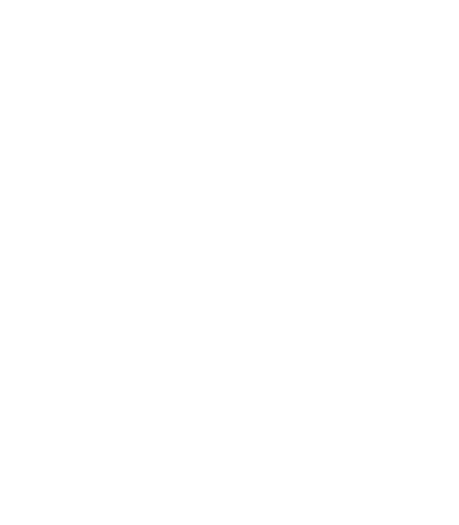_t}}
\\\vspace{2pt}
\resizebox{2.4cm}{!}{\input{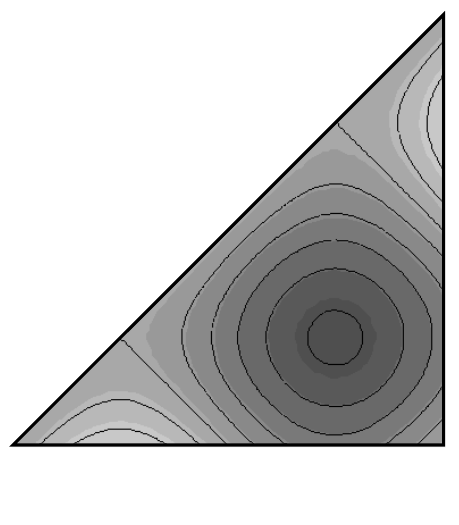_t}}\hspace{22pt}
\resizebox{2.4cm}{!}{\input{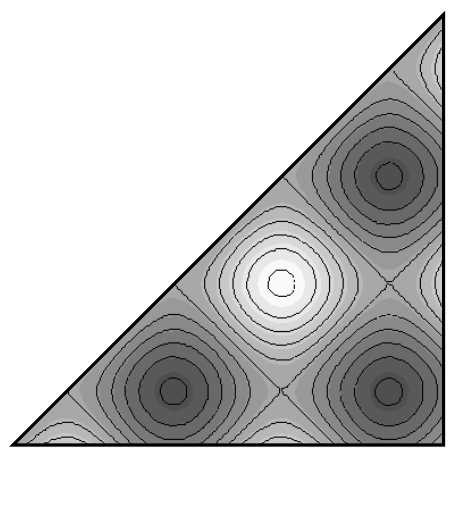_t}}\hspace{22pt}
\resizebox{2.4cm}{!}{\input{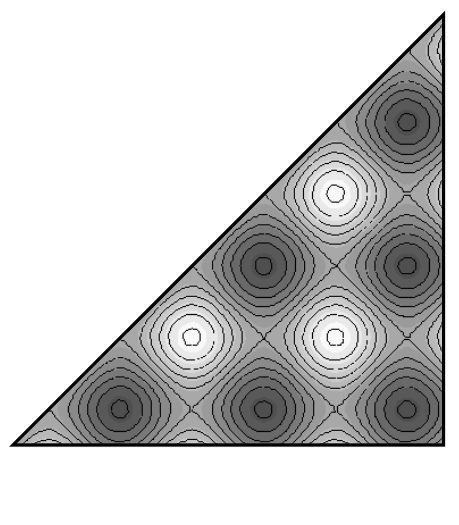_t}}
\resizebox{2.4cm}{!}{\input{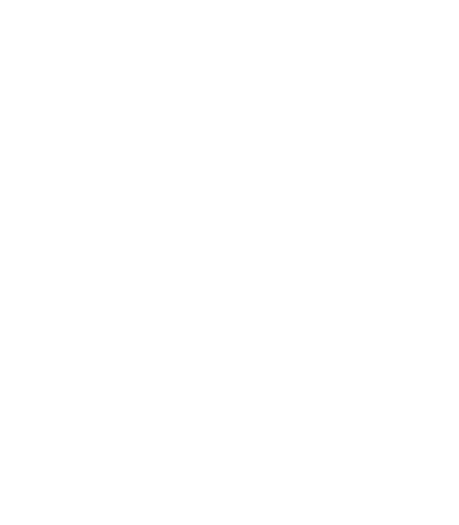_t}}\\
\vspace{18pt}
\resizebox{2.4cm}{!}{\input{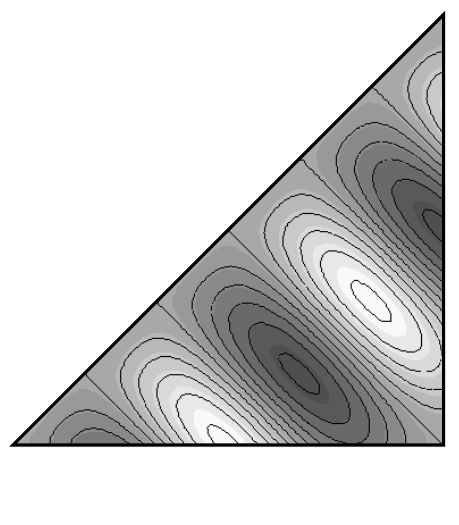_t}}\hspace{22pt}
\resizebox{2.4cm}{!}{\input{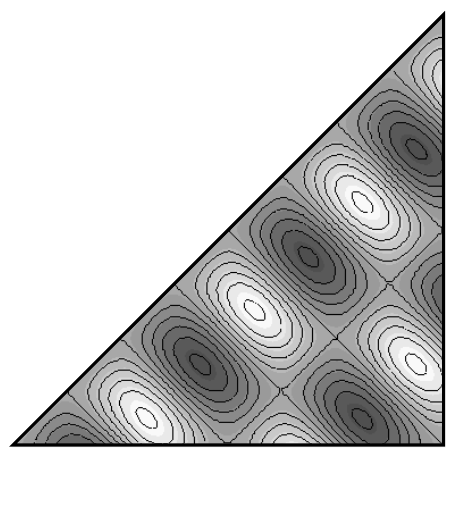_t}}\hspace{22pt}
\resizebox{2.4cm}{!}{\input{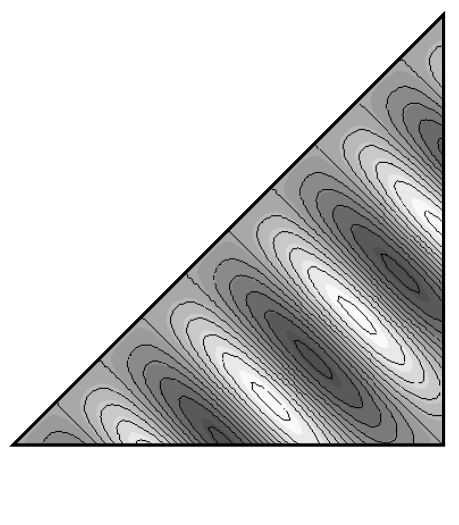_t}}
\resizebox{2.4cm}{!}{\input{real.pdf_t}}
\\\vspace{2pt}
\resizebox{2.4cm}{!}{\input{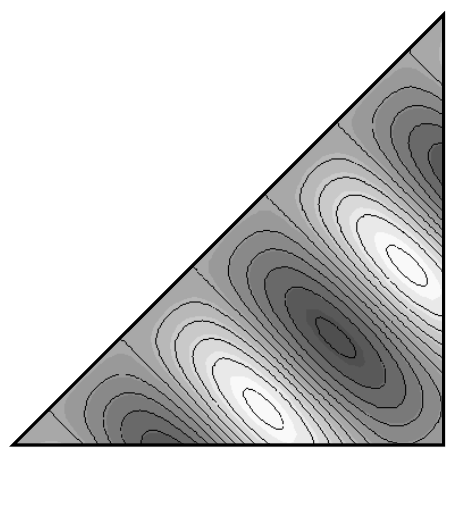_t}}\hspace{22pt}
\resizebox{2.4cm}{!}{\input{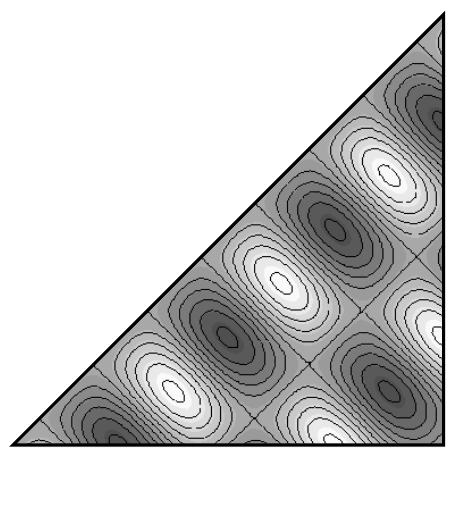_t}}\hspace{22pt}
\resizebox{2.4cm}{!}{\input{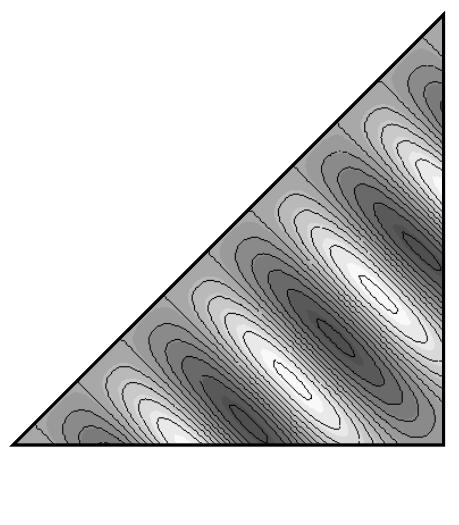_t}}
\resizebox{2.4cm}{!}{\input{imaginary.pdf_t}}
\caption{The contour plots of continuous functions $E^-_{(k,l)}$, $k,l\in \{0,\dots,3\}, k>l$, which are discretely pairwise orthogonal on the grid $L^-_{a,b,4,1}$.}\label{FEm}
\end{figure}

\subsubsection{Antisymmetric discrete Fourier transform}\

Suppose we have a discrete function $f:L^-_{a,b,N,1}\map \Com $ defined on the grid $L^-_{a,b,N,1}$. The {\it antisymmetric discrete Fourier transform} of $f$ over $L^-_{a,b,N,1}$ is given by
\begin{equation}\label{abetas}
\beta^-_{kl}=\frac{1}{N^2}\sum_{\setcomb{m,n=0}{m>n}}^{N-1}f(x_m,y_n)\overline{E^-_{(k,l)}(x_m,y_n)} \q k>l,\,k,l=0\dots N-1.
\end{equation}
Orthogonality relation (\ref{adortho}) immediately gives the inverse transform of $N(N-1)/2$ coefficients $\beta^-_{kl}$:
\begin{equation}\label{aphase}
    f(x_m,y_n)=\sum_{\setcomb{k,l=0}{k>l}}^{N-1}\beta^-_{kl}E^-_{(k,l)}(x_m,y_n).
\end{equation}

\subsection{Two--dimensional symmetric exponential functions}\

Two-dimensional symmetric exponential functions $E^+_{(\lambda,\mu)}:\R^2\map \Com$ have for $\lambda,\mu\in \R$  the following explicit form
  \begin{align}
 E^+_{(\lambda,\mu)}(x, y)
     &=\left|\begin{smallmatrix}
     e^{2\pi i\lambda x}&e^{2\pi i\lambda y}\\
     e^{2\pi i\mu x}&e^{2\pi i\mu y}\\
     \end{smallmatrix}\right|^+
     =e^{2\pi i(\lambda x+\mu y)}
     +e^{2\pi i(\lambda y+\mu x)}
\end{align}
From the explicit formula we immediately obtain the symmetry of $E^+_{(\lambda,\mu)}(x,y)$ with respect to permutation of variables~$(x,y)$ \begin{equation}\label{expsym}
 E^+_{(\lambda,\mu)}(y,x)=E^+_{(\lambda,\mu)}(x,y).
\end{equation}
and, moreover, with respect to the permutation of $(\la,\mu)$
\begin{equation}\label{expsym2}
E^+_{(\lambda,\mu)}(x,y)= E^+_{(\mu,\lambda)}(x,y).
\end{equation}
Therefore, we consider only such $E^+_{(\lambda,\mu)}$ with so called {\bf dominant} $(\lambda,\mu)$, that is, pairs $(\lambda,\mu)$ with $\la\geq\mu$. Functions $E^+_{(k,l)}$ with $k,l\in\Z$ have additional symmetries related to the periodicity of exponential function. One can verify directly that
\begin{equation}\label{exppers}
    E^+_{(k,l)}(x+r,y+s)= E^+_{(k,l)}(x,y),\q r,s\in \Z
     \end{equation}
The relations (\ref{expsym}) and (\ref{exppers}) imply that it is sufficient to consider the functions $E^+_{(k,l)},\,k,l\in\Z$ on the closure of the fundamental domain $F(S_2^{\mathrm{aff}})$~\cite{KPexp}.

We also have the following property for $a\in \R$
\begin{equation}\label{sexpshift}
 E^+_{(k,l)}(x+a,y+a)= e^{2\pi i(k+l)a}E^+_{(k,l)}(x,y).
\end{equation}

\subsubsection{Continuous orthogonality}\

The functions $E^+$ are mutually orthogonal on $\overline{F(S_2^{\mathrm{aff}})}$, i.e.
\begin{equation*}
    \int_{F(S_2^{\mathrm{aff}})} E^+_{(k,l)}(x,y)
\overline{E^+_{(k',l')}(x,y)}\,dx\,dy=
G_{kl}\delta_{kk'}\delta_{ll'},\q k,l,k',l' \in\Z,\, k\geq l,k'\geq l' \end{equation*}
where symbol $G_{kl}$ is defined by
\begin{equation}
G_{kl}=\begin{cases} 2 & \text{if $k=l$} \\ 1 & \text{otherwise}. \end{cases}
\end{equation}

Every function $f:\R^2\map \Com$ that is symmetric $f(x,y)=f(y,x)$ and periodic $f(x+r,y+s)= f(x,y),\, r,s\in \Z$ and has continuous derivatives can be expanded in the antisymmetric exponential functions $E^+_{(k,l)}$:
\begin{equation}
f(x,y)=\sum_{\setcomb{k,l\in \Z}{k\geq l}} {\wt c}_{kl} E^+_{(k,l)}(x,y)
,\q
{\wt c}_{kl} = G_{kl}^{-1}\int_{F(S^{\mathrm{aff}}_2)} f(x,y)
\overline{E^+_{(k,l)}(x,y)}\,dx\, dy.
 \end{equation}

\subsubsection{Solutions of the Laplace equation}\

The functions $E^+$ are solutions of the Laplace equation
\begin{equation}\label{laplacep}
\left(\frac{\partial^2}{\partial x^2}+\frac{\partial^2}{\partial y^2}\right)E^+_{(k,l)}(x,y)=-4\pi^2(k^2+l^2)E^+_{(k,l)}(x,y)
\end{equation}
and moreover of the equation
\begin{equation*}
\frac{\partial^2}{\partial x^2}\frac{\partial^2}{\partial y^2}E^+_{(k,l)}(x,y)=16\pi^4k^2l^2E^+_{(k,l)}(x,y).
\end{equation*}
The functions $E^+$ satisfy the condition $$\frac{\partial}{\partial \mathbf{n} }E^+_{(k,l)}(x,y)=0,$$ where $\mathbf{n}$ is the normal to the boundary $x=y$.

\subsubsection{Discrete orthogonality}\

Discrete orthogonality of symmetric exponential functions over the grid of form $(x_m,y_n)=(m/N,n/N),$ $ m,n\in \{0,\dots,N-1\}$, $m\geq n$ was proved in~\cite{KPexp}. We consider the orthogonality over a more general type of grid. Besides the parameter $N$, we parameterize the grid by parameters $a\in \R$ and $b\in [0,1]$. The
equidistant $N(N+1)/2$--point grid $L^+_{a,b,N,1}$ is given by
$$L^+_{a,b,N,1}=\left\{(x_m,y_n)\,|\,m\geq n,\,m,n=0\dots N-1  \right \}$$
where $(x_m,y_n)$ are given by (\ref{xmyn}).
Using the property (\ref{sexpshift}), we observe that the orthogonality relations from~\cite{KPexp} also hold over the grid $L^+_{a,b,N,1}$:
\begin{equation}\label{sdortho}
\sum_{\setcomb{m,n=0}{m\geq n}}^{N-1} G_{mn}^{-1}E^+_{(k,l)}(x_m,y_n)\overline{E^+_{(k',l')}(x_m,y_n)}=G_{kl} N^2\delta_{kk'}\delta_{ll'},
\end{equation}
where $k,l,k',l'\in \{0,\dots,N-1\}, k\geq l,k'\geq l'$. The set of functions $E^+_{(k,l)}$ with $k,l\in \{0,1,2\}, k\geq l$ of mutually orthogonal functions on the grid $L^+_{a,b,3,1}$ is depicted in Figure \ref{FEp}.

Note that for $N$ odd, $N=2M+1$, the discrete orthogonality relations (\ref{sdortho}) also hold for the set of functions $E^+_{(k,l)}$ with $k,l\in \{-M,\dots,M\}, k\geq l$.

\begin{figure}[!ht]
\resizebox{2.4cm}{!}{\input{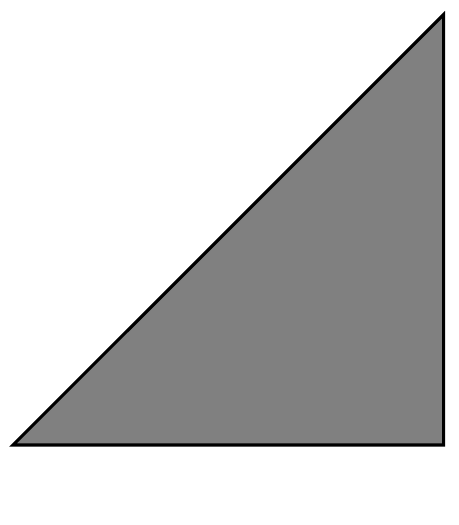_t}}\hspace{22pt}
\resizebox{2.4cm}{!}{\input{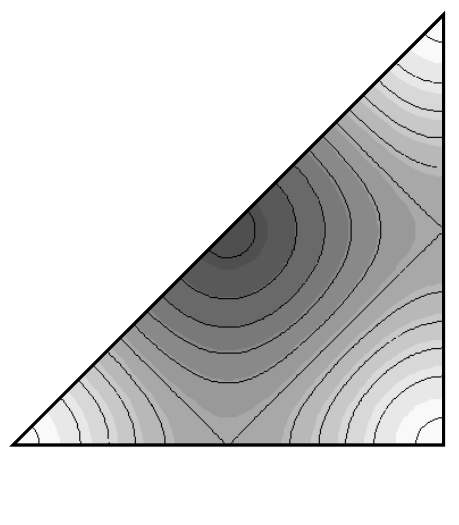_t}}\hspace{22pt}
\resizebox{2.4cm}{!}{\input{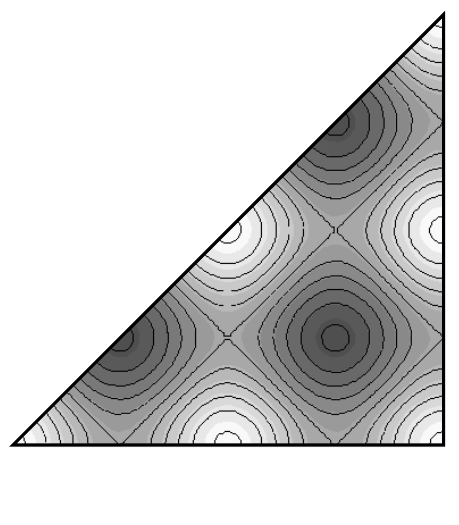_t}}
\resizebox{2.4cm}{!}{\input{real.pdf_t}}
\\\vspace{2pt}
\resizebox{2.4cm}{!}{\input{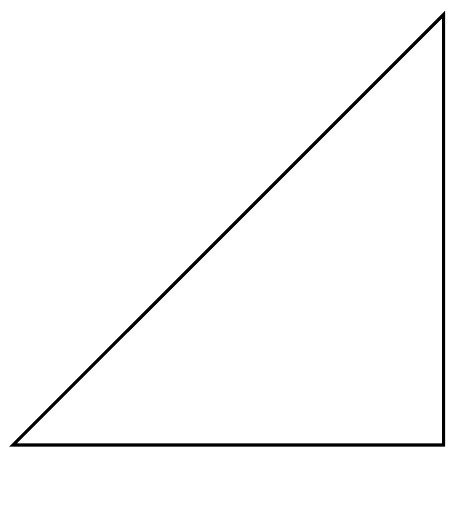_t}}\hspace{22pt}
\resizebox{2.4cm}{!}{\input{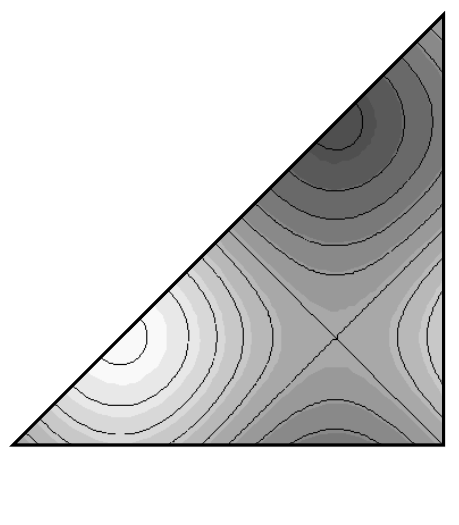_t}}\hspace{22pt}
\resizebox{2.4cm}{!}{\input{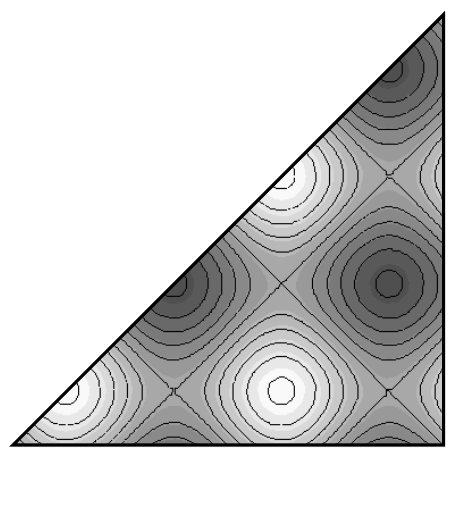_t}}
\resizebox{2.4cm}{!}{\input{imaginary.pdf_t}}\\
\vspace{18pt}
\resizebox{2.4cm}{!}{\input{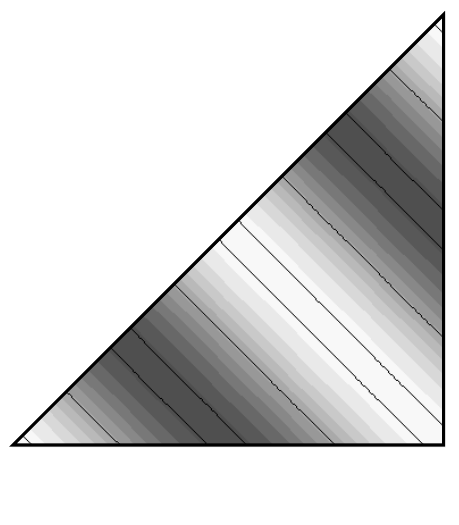_t}}\hspace{22pt}
\resizebox{2.4cm}{!}{\input{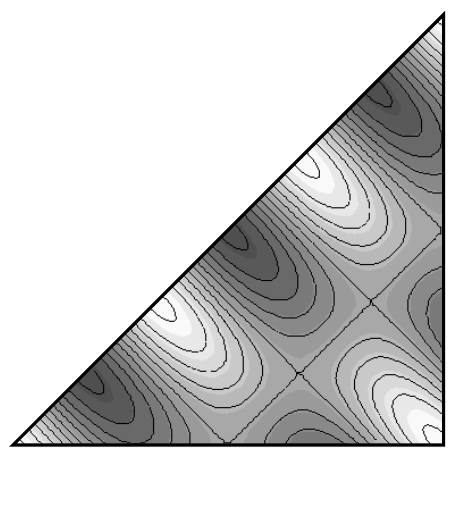_t}}\hspace{22pt}
\resizebox{2.4cm}{!}{\input{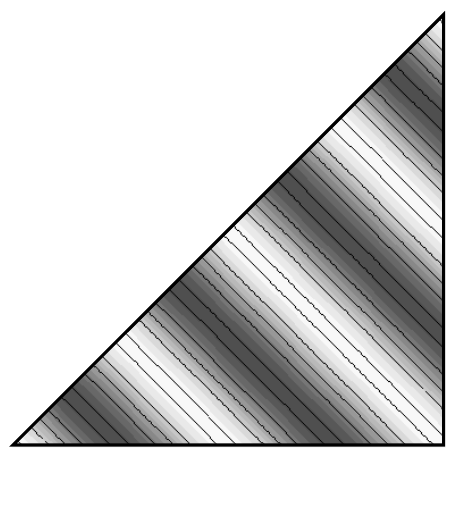_t}}
\resizebox{2.4cm}{!}{\input{real.pdf_t}}
\\\vspace{2pt}
\resizebox{2.4cm}{!}{\input{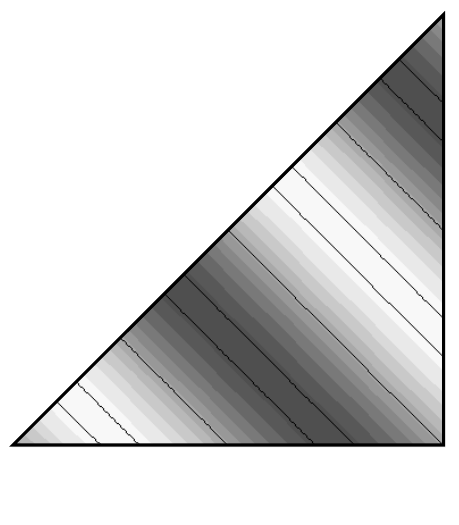_t}}\hspace{22pt}
\resizebox{2.4cm}{!}{\input{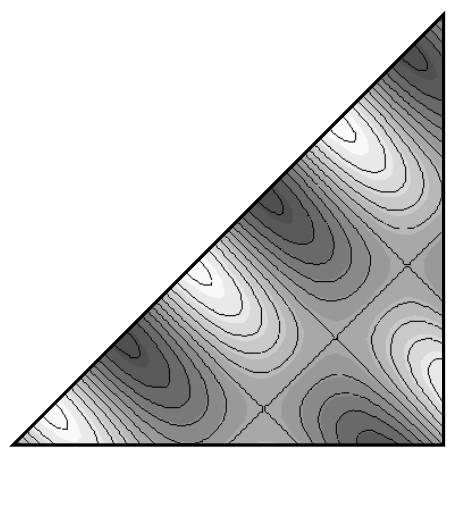_t}}\hspace{22pt}
\resizebox{2.4cm}{!}{\input{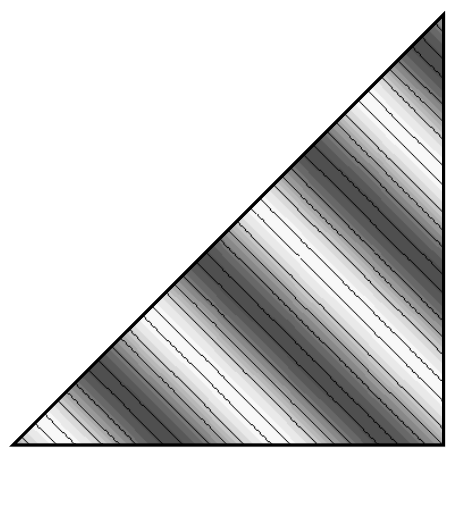_t}}
\resizebox{2.4cm}{!}{\input{imaginary.pdf_t}}
\caption{The contour plots of the continuous functions $E^+_{(k,l)}$, $k,l\in \{0,1,2\}, k\geq l$, which are discretely pairwise orthogonal on the grid $L^+_{a,b,3,1}$.}\label{FEp}
\end{figure}
\subsubsection{Symmetric discrete Fourier transform}\

Suppose we have a discrete function $f:L^+_{a,b,N,1}\map \Com $ defined on the grid $L^+_{a,b,N,1}$. The {\it symmetric discrete Fourier transform} of $f$ over $L^+_{a,b,N,1}$ is given by
\begin{equation}\label{sbetas}
\beta^+_{kl}=\frac{1}{G_{kl}N^2}\sum_{\setcomb{m,n=0}{m\geq n}}^{N-1}G_{mn}^{-1} f(x_m,y_n)\overline{E^+_{(k,l)}(x_m,y_n)} \q k\geq l,\,k,l=0\dots N-1.
\end{equation}
The orthogonality relation (\ref{adortho}) immediately gives the inverse transform of $N(N+1)/2$ coefficients $\beta^+_{kl}$:
\begin{equation}\label{sphase}
    f(x_m,y_n)=\sum_{\setcomb{k,l=0}{k\geq l}}^{N-1}\beta^+_{kl}E^+_{(k,l)}(x_m,y_n)
\end{equation}

\section{Interpolation by (anti)symmetric exponential functions}

\subsection{General two--dimensional trigonometric interpolation}\

Let us consider a symmetrically placed square in $\R^2$ with a side of $T\in \R$. For $a\in \R$, the square is given by
\begin{equation}\label{square}
K_{[a,a']}=[a,a']\times [a,a']
\end{equation}
Let us also select an arbitrary natural number $N$ and parameter $b\in [0,1]$ and consider a symmetrically placed $N^2$-point grid $L_{a,b,N,T}=\left\{(x_m,y_n)\,|\,m,n=0\dots N-1  \right \}\subset K_{[a,a+T]} $
where
\begin{equation*}
    (x_m,y_n)=\left( a+\frac{m+b}{N}T,a+\frac{n+b}{N}T  \right).
\end{equation*}
Suppose we have a given function $f:K_{[a,a+T]}\map \Com$ and a set of points $L_{a,b,N,T}\subset K_{[a,a+T]}$. In the following, we distinguish two cases, namely $N=2M+1$, or $N=2M$. The {\it (trigonometric) interpolation problem} can be formulated in the following way: find a {\bf trigonometric interpolating polynomial} of the form \begin{equation}\label{trigT}
\psi_{N,T}(x,y)=\sum_{k,l
  =-M}^{M}c_{kl}e^{2\pi\i k \frac{x}{T}}e^{2\pi\i l\frac{y}{T}},\qquad x,y\in\mathbb R
\end{equation}
such that it coincides with $f$ on the grid $L_{a,b,N,T}$, which means satisfying for all $(x_m,y_n)\in L_{a,b,N,T}$ the condition $\psi_{N,T}(x_m,y_n)=f(x_m,y_n)$. Henceforward we set for simplicity $T=1$, i.e. we have the trigonometric interpolating polynomial $\psi_{N}\equiv\psi_{N,1}$ of the form
\begin{equation}\label{trig}
    \psi_N(x,y)=\sum_{k,l=-M}^{M}c_{kl}e^{2\pi\i k x}e^{2\pi\i ly},\qquad x,y\in\mathbb R
\end{equation}
satisfying on $K_{[a,a+1]}$
\begin{equation}\label{trig2}
    \psi_{N}(x_m,y_n)=f(x_m,y_n), \q m,n=0\dots N-1.
\end{equation}
Note that, in all of the following formulas, we can always recover an arbitrary size $T$ simply by linear transformation $$(x,y)\map \left(\frac{x}{T},\frac{y}{T}\right). $$
For $N=2M+1$, the trigonometric interpolating polynomial $\psi_{N}$ has $(2M+1)^2=N^2$ unknown coefficients $c_{kl}$, to which correspond $N^2$ constraints (\ref{trig2}). For $N=2M$, denoting $\tau_N=e^{2\pi \i (Na+b)}$, we assume further $4M+1$ conditions
\begin{equation}\label{assum}
\begin{split}
 c_{k,-M} &=\tau_N c_{k,M} ,\q k=-M\dots M-1 \\
 c_{-M,l}&= \tau_N c_{M,l},\q l=-M\dots M
\end{split}
\end{equation}
and we have $(2M+1)^2-(4M+1)=(2M)^2=N^2$ unknown coefficients $c_{kl}$ corresponding to $N^2$ constraints (\ref{trig2}).

In order to analyze the system of linear equations (\ref{trig2}),  we introduce an $N\times (2M+1)$ matrix $$V_M(x_0,\dots,x_{N-1}):=\begin{pmatrix}e^{2\pi\i (-M)x_0} & e^{2\pi\i (-M+1)x_0} & \cdots & 1 & \cdots & e^{2\pi\i (M-1)x_0} & e^{2\pi\i (M)x_0} \\
e^{2\pi\i (-M)x_1} & e^{2\pi\i (-M+1)x_1} & \cdots & 1 & \cdots & e^{2\pi\i (M-1)x_1} & e^{2\pi\i (M)x_1} \\
\vdots & \vdots & \ddots & \vdots & \vdots &  \vdots & \vdots \\
e^{2\pi\i (-M)x_{N-1}} & e^{2\pi\i (-M+1)x_{N-1}} & \cdots & 1 & \cdots & e^{2\pi\i (M-1)x_{N-1}} & e^{2\pi\i (M)x_{N-1}}
\end{pmatrix}.$$
and the product
$$W_N\equiv W_N(x_0,\dots,x_{N-1}) \equiv W_N(y_0,\dots,y_{N-1}):=\prod_{\setcomb{m,n=0}{m>n}}^{N-1}(e^{2\pi\i x_m}-e^{2\pi\i x_n} ).    $$

The coefficient matrix of the system (\ref{trig2}) can be written for $N=2M+1$ as
\begin{equation}\label{cmat}
 V_M(x_0,\dots,x_{N-1}) \otimes V_M(y_0,\dots,y_{N-1}).
\end{equation}
The matrix (\ref{cmat}) is Vandermonde-like and its determinant is
$$ \det [V_M(x_0,\dots,x_{N-1}) \otimes V_M(y_0,\dots,y_{N-1})]= \tau_N^{-N(N-1)}W_N^{2N}.$$
For $N=2M$, we add to the matrix (\ref{cmat}) an additional block of rows (\ref{assum}). The determinant of the resulting matrix is equal to
$$-4^{N}\tau_N^{-N^{2}+2N+1}W_N^{2N}. $$
Note that both determinants are always non-zero. Thus, the solution of the interpolation problem always exists and is unique.

The coefficients $c_{kl}$ are
given for $N=2M+1$ by
\begin{equation}\label{interodd}
c_{kl}=\frac{1}{N^2}\sum_{m,n=0}^{N-1}f(x_m,y_n)e^{-2\pi\i k x_m}e^{-2\pi\i ly_n}
\end{equation}
and, introducing the symbol $g_{k,M}$ by
\begin{equation}
g_{k,M}=\begin{cases}\frac12 & \text{if $k=-M,M$} \\ 1 & \text{otherwise}, \end{cases}
\end{equation}
one can write for $N=2M$
\begin{equation}\label{intereven}
 c_{kl}=\frac{g_{k,M}g_{l,M}}{N^2}\sum_{m,n=0}^{N-1}f(x_m,y_n)e^{-2\pi\i k x_m}e^{-2\pi\i ly_n}.
\end{equation}

\subsection{Antisymmetric interpolation}\

For interpolation with antisymmetric exponential functions, we consider the triangle $K_{[a,a+1]}^-$
inside the square $K_{[a,a']}$:
\begin{equation}\label{afunds}
K_{[a,a+1]}^-= \set{(x,y)\in[a,a+1]\times[a,a+1]}{
x> y}.
\end{equation}

For a given function $f:K_{[a,a+1]}^-\map \Com$ and a set of points $L^-_{a,b,N,1}\subset K_{[a,a+1]}^-$ we define an {\it antisymmetric trigonometric interpolating function}
\begin{equation}\label{atrig}
    \psi^-_{N}(x,y)=\sum_{\setcomb{k,l=-M}{k>l}}^{M}c^-_{kl}E^-_{(k,l)}(x,y),\q x,y\in \R
\end{equation}
satisfying
\begin{equation}\label{atrig2}
    \psi^-_{N}(x_m,y_n)=f(x_m,y_n), \q m>n,\,  m,n=0\dots N-1.
\end{equation}

For $N=2M+1$, the antisymmetric interpolating function $\psi^-_{N}$ has $N(N-1)/2$ unknown coefficients $c_{kl}$, to which correspond $N(N-1)/2$ constraints (\ref{atrig2}). For $N=2M$, we assume further $2M$ conditions
\begin{equation}\label{aassum}
\begin{split}
 c^-_{l,-M}&= -\tau_N c^-_{M,l},\q l=-M+1\dots M-1 \\
 c^-_{M,-M}&=0
\end{split}
\end{equation}
and we have $(2M+1)2M/2-2M=N(N-1)/2$ unknown coefficients $c^-_{kl}$ corresponding to $N(N-1)/2$ constraints (\ref{atrig2}).
\begin{tvr}\label{aip}
There exists a unique antisymmetric interpolating function (\ref{atrig}) satisfying (\ref{atrig2}). The coefficients $c^-_{kl}$ are
given for $N=2M+1$ by
\begin{equation}\label{ainterodd}
c^-_{kl}=\frac{1}{N^2}\sum_{\setcomb{m,n=0}{m>n}}^{N-1}f(x_m,y_n)\overline{E^-_{(k,l)}(x_m,y_n)}
\end{equation}
and for $N=2M$ assuming (\ref{aassum}) by
\begin{equation}\label{aintereven}
 c^-_{kl}=\frac{g_{k,M}g_{l,M}}{N^2}\sum_{\setcomb{m,n=0}{m>n}}^{N-1}f(x_m,y_n)\overline{E^-_{(k,l)}(x_m,y_n)}.
\end{equation}
\end{tvr}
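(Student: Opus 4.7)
The plan is to reduce Proposition~\ref{aip} to the general two-dimensional trigonometric interpolation established earlier in this section. Starting from $f:L^-_{a,b,N,1}\map\Com$, I first extend it antisymmetrically to the full grid $L_{a,b,N,1}$ by setting $\wt f(x_m,y_n):=-f(x_n,y_m)$ for $m<n$ and $\wt f(x_m,y_m):=0$. The general interpolation then provides a unique polynomial $\psi_N$ of the form (\ref{trig}) that interpolates $\wt f$, with coefficients given by (\ref{interodd}) when $N=2M+1$ and by (\ref{intereven}) together with (\ref{assum}) when $N=2M$.

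The next step is to show that the coefficients themselves satisfy $c_{lk}=-c_{kl}$, and in particular $c_{kk}=0$. This follows by swapping the summation indices $m\leftrightarrow n$ in the formula for $c_{lk}$, invoking the antisymmetry of $\wt f$, and using the crucial fact that the two one-dimensional grids coincide ($x_m=y_m$ as real numbers), so that $e^{-2\pi\i k y_m}=e^{-2\pi\i k x_m}$. With antisymmetry in hand, the $N^2$ terms of $\psi_N$ group into $N(N-1)/2$ antisymmetric pairs,
$$\psi_N(x,y)=\sum_{\setcomb{k,l=-M}{k>l}}^{M} c_{kl}\bigl(e^{2\pi\i(kx+ly)}-e^{2\pi\i(lx+ky)}\bigr)=\sum_{\setcomb{k,l=-M}{k>l}}^{M} c_{kl}\,E^-_{(k,l)}(x,y),$$
so $\psi_N$ already has the antisymmetric form (\ref{atrig}) and, by construction, agrees with $f$ on $L^-_{a,b,N,1}$. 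Setting $c^-_{kl}:=c_{kl}$ therefore produces the desired antisymmetric interpolant.

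For $N=2M$ I then verify that the auxiliary constraints (\ref{assum}) on the general $c_{kl}$ are equivalent, modulo $c_{lk}=-c_{kl}$, to the constraints (\ref{aassum}) on the $c^-_{kl}$. A short case split on the indices in (\ref{assum}) confirms this: the rows with $l=-M$ and $l=M$ both collapse to $c^-_{M,-M}=0$ via $c_{kk}=0$, while the rows with $l\in\{-M+1,\ldots,M-1\}$ become precisely $c^-_{l,-M}=-\tau_N c^-_{M,l}$; the first family in (\ref{assum}) is rendered redundant by antisymmetry. Uniqueness of $\psi^-_N$ is then immediate, since a second antisymmetric interpolant would yield a second antisymmetric extension agreeing with $\wt f$ on $L_{a,b,N,1}$, contradicting the uniqueness of the general trigonometric interpolation.

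Finally, the explicit formulas (\ref{ainterodd}) and (\ref{aintereven}) are obtained by substituting the definition of $\wt f$ back into (\ref{interodd}), (\ref{intereven}) and collapsing the double sum: after relabeling, the contributions from $m>n$ and $m<n$ combine into $f(x_m,y_n)[e^{-2\pi\i(kx_m+ly_n)}-e^{-2\pi\i(lx_m+ky_n)}]=f(x_m,y_n)\overline{E^-_{(k,l)}(x_m,y_n)}$, while the diagonal terms vanish. The only genuine bookkeeping obstacle is the equivalence of (\ref{assum}) and (\ref{aassum}) in the even case sketched above; once the indices are carefully matched, everything else reduces to algebraic manipulation.
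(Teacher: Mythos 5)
Your proposal is correct and follows essentially the same route as the paper: antisymmetrically extend $f$ to the full grid, invoke the general trigonometric interpolation, show $c_{lk}=-c_{kl}$ via the index swap and the coincidence $x_m=y_m$, regroup into $E^-_{(k,l)}$, and derive uniqueness from uniqueness of the general interpolant. Your explicit verification that (\ref{assum}) reduces to (\ref{aassum}) under coefficient antisymmetry is a welcome addition, since the paper dismisses the even case with ``analogously for $N=2M$.''
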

\begin{proof}
For the function $f:K_{[a,a+1]}^-\map \Com$, we define its antisymmetric extension $Af:K_{[a,a+1]}\map \Com$ by the formula
\begin{equation}\label{aext}
 Af(x,y)=\begin{cases}
f(x,y) & x>y\\
0 & x=y \\
-f(y,x) & x<y.
\end{cases}
\end{equation}
We obtain a unique trigonometric interpolating polynomial $A\psi_{N}$ of the form
\begin{equation}\label{AAtrig}
    A\psi_N(x,y)=\sum_{k,l=-M}^{M}(Ac)_{kl}e^{2\pi\i k x}e^{2\pi\i ly},\q x,y\in \R
\end{equation}
satisfying
\begin{equation}\label{AAtrig2}
    A\psi_{N}(x_m,y_n)=Af(x_m,y_n), \q m,n=0\dots N-1.
\end{equation}
According to (\ref{interodd}), the coefficients $(Ac)_{kl}$ are
given for $N=2M+1$ by
\begin{equation}\label{AAcoeff}
 (Ac)_{kl}=\frac{1}{N^2}\sum_{m,n=0}^{N-1}Af(x_m,y_n)e^{-2\pi\i k x_m}e^{-2\pi\i ly_n}.
\end{equation}
Note that since $Af$ is antisymmetric we have $(Ac)_{kl}=-(Ac)_{lk}$, and so, $(Ac)_{kk}=0$. Taking this into account, the formula (\ref{AAtrig}) can be rewritten as
\begin{equation}\label{jjjh}
\begin{split}
A\psi_N(x,y)=&\sum_{\setcomb{k,l=-M}{k>l}}^{M}(Ac)_{kl}e^{2\pi\i k x}e^{2\pi\i ly}-\sum_{\setcomb{k,l=-M}{k<l}}^{N-1}(Ac)_{lk}e^{2\pi\i k x}e^{2\pi\i ly}\\
 =&\sum_{\setcomb{k,l=-M}{k>l}}^{M}(Ac)_{kl}E^-_{(k,l)}(x,y)
\end{split}
\end{equation}

Using the symmetricity of the interpolating grid $x_m=y_m$ and the antisymmetricity of $Af$ we rewrite (\ref{AAcoeff})
\begin{equation}\label{AAint}
\begin{split}
(Ac)_{kl}=&\frac{1}{N^2}\sum_{\setcomb{m,n=0}{m>n}}^{N-1}Af(x_m,y_n)e^{-2\pi\i k x_m}e^{-2\pi\i ly_n}-\frac{1}{N^2}\sum_{\setcomb{m,n=0}{m<n}}^{N-1}f(y_n,x_m)e^{-2\pi\i k x_m}e^{-2\pi\i ly_n}\\
& =\frac{1}{N^2}\sum_{\setcomb{m,n=0}{m>n}}^{N-1}f(x_m,y_n)\overline{E^-_{(k,l)}(x_m,y_n)}:=c^-_{kl}
\end{split}
\end{equation}
and analogously for $N=2M$.

The existence of two different antisymmetric interpolation polynomials would imply the existence of two different interpolating polynomials satisfying (\ref{AAtrig}), (\ref{AAtrig2}) --- a contradiction.
\end{proof}

\subsubsection{Calculation of the coefficients $c^-_{kl}$}\

Instead of the direct calculation of the coefficients $c^-_{kl}$, the antisymmetric discrete Fourier transform (\ref{abetas}) can be used, and the resulting coefficients $\beta^-_{kl}$ transformed to $c^-_{kl}$'s. By direct comparison of (\ref{ainterodd}), (\ref{aintereven}) to (\ref{abetas}), we obtain for $N=2M+1$
\begin{equation}
\begin{split}
c^-_{kl}=\beta^-_{kl},\q &k>l, k,l=0\dots M \\
c^-_{k,-l}=-\tau_N\beta^-_{N-l,k},\q & k=0\dots M,\,l=1\dots M\\
c^-_{-k,-l}=\tau_N^2\beta^-_{N-k,N-l}, \q &k<l,\, k,l=1\dots M
\end{split}
\end{equation}
and for $N=2M$
\begin{equation}\label{abetastoceven}
\begin{split}
c^-_{kl}=g_{k,M}g_{l,M}\beta^-_{kl},\q &k>l,\,k,l=0\dots M \\
c^-_{k,-l}=-g_{k,M}\tau_N\beta^-_{N-l,k},\q & k=0\dots M,\,l=1\dots M-1\\
c^-_{-k,-l}=\tau_N^2\beta^-_{N-k,N-l}, \q &k<l,\, k,l=1\dots M-1
\end{split}
\end{equation}
The formula (\ref{abetastoceven}) determines $(2M)(2M-1)/2$ coefficients; the rest of the $2M$ coefficients are determined via the relations (\ref{aassum}).

\subsubsection{Trigonometric form of $\psi^-_N(x,y)$}\

The antisymmetric interpolating polynomial (\ref{atrig}) can be brought to its 'trigonometric form'.
Introducing the symbol $h_{k}$ via the relation
\begin{equation}\label{hk}
h_{k}=\begin{cases}\frac12 & \text{if $k=0$} \\ 1 & \text{otherwise} \end{cases}
\end{equation}
we have
\begin{equation}\label{base}
\begin{split}
    \psi^-_N(x,y)=&\sum_{\setcomb{k,l=0}{k>l}}^{M}h_{k}h_{l}[A^-_{kl}(\cos 2\pi k x \cos 2\pi ly-\cos 2\pi l x \cos 2\pi ky )\\ +&B^-_{kl}(\sin 2\pi k x \cos 2\pi ly-\cos 2\pi l x \sin 2\pi ky)+\\
                +& C^-_{kl}(\cos 2\pi k x \sin 2\pi ly-\sin 2\pi l x \cos 2\pi ky)\\ +&D^-_{kl}(\sin 2\pi k x \sin 2\pi ly-\sin 2\pi l x \sin 2\pi ky)]
\end{split}
\end{equation}
where
\begin{equation}\label{aA}
\begin{split}
    A^-_{kl}=& c^-_{k,l}+c^-_{-k,l}+c^-_{k,-l}+c^-_{-k,-l}\\
    B^-_{kl}=& \i (c^-_{k,l}-c^-_{-k,l}+c^-_{k,-l}-c^-_{-k,-l})\\
    C^-_{kl}=& \i (c^-_{k,l}+c_{-k,l}-c^-_{k,-l}-c^-_{-k,-l})\\
    D^-_{kl}=& -c^-_{k,l}+c^-_{-k,l}+c^-_{k,-l}-c^-_{-k,-l}\\
\end{split}
\end{equation}
Substituting equations (\ref{ainterodd}), (\ref{aintereven}) into (\ref{aA}) we obtain the following explicit formulas
for $N=2M+1$
\begin{equation}\label{aAe}
\begin{split}
    A^-_{kl}=&\left(\frac{2}{N}\right)^2\sum_{\setcomb{m,n=0}{m>n}}^{N-1}f(x_m,y_n)  (\cos 2\pi k x_m \cos 2\pi ly_n-\cos 2\pi l x_m \cos 2\pi ky_n )\\
    B^-_{kl}=& \left(\frac{2}{N}\right)^2\sum_{\setcomb{m,n=0}{m>n}}^{N-1}f(x_m,y_n)(\sin 2\pi k x_m \cos 2\pi ly_n-\cos 2\pi l x_m \sin 2\pi ky_n)\\
    C^-_{kl}=& \left(\frac{2}{N}\right)^2\sum_{\setcomb{m,n=0}{m>n}}^{N-1}f(x_m,y_n)(\cos 2\pi k x_m \sin 2\pi ly_n-\sin 2\pi l x_m \cos 2\pi ky_n)\\
    D^-_{kl}=&\left(\frac{2}{N}\right)^2\sum_{\setcomb{m,n=0}{m>n}}^{N-1}f(x_m,y_n) (\sin 2\pi k x_m \sin 2\pi ly_n-\sin 2\pi l x_m \sin 2\pi ky_n)\\
\end{split}
\end{equation}
and for $N=2M$
\begin{equation}\label{aAe2}
\begin{split}
    A^-_{kl}=&g_{k,M}g_{l,M}\left(\frac{2}{N}\right)^2\sum_{\setcomb{m,n=0}{m>n}}^{N-1}f(x_m,y_n)  (\cos 2\pi k x_m \cos 2\pi ly_n-\cos 2\pi l x_m \cos 2\pi ky_n )\\
    B^-_{kl}=& g_{k,M}g_{l,M}\left(\frac{2}{N}\right)^2\sum_{\setcomb{m,n=0}{m>n}}^{N-1}f(x_m,y_n)(\sin 2\pi k x_m \cos 2\pi ly_n-\cos 2\pi l x_m \sin 2\pi ky_n)\\
    C^-_{kl}=& g_{k,M}g_{l,M}\left(\frac{2}{N}\right)^2\sum_{\setcomb{m,n=0}{m>n}}^{N-1}f(x_m,y_n)(\cos 2\pi k x_m \sin 2\pi ly_n-\sin 2\pi l x_m \cos 2\pi ky_n)\\
    D^-_{kl}=&g_{k,M}g_{l,M}\left(\frac{2}{N}\right)^2\sum_{\setcomb{m,n=0}{m>n}}^{N-1}f(x_m,y_n) (\sin 2\pi k x_m \sin 2\pi ly_n-\sin 2\pi l x_m \sin 2\pi ky_n)\\
\end{split}
\end{equation}

\subsubsection{Example of antisymmetric interpolation}\

Consider the following multiple of the Gaussian distribution
\begin{equation}\label{f4}
 f(x,y)=e^{- \frac{(x-x')^2+(y-y')^2}{2\sigma^2}},
\end{equation}
where $(x',y')=(0.707,0.293)$ and $\sigma=0.079$. The function $f$, restricted to the domain $F(S_2^{\mathrm{aff}})$, is depicted in Figure~\ref{Ff}.

\begin{figure}[!ht]
\resizebox{3.2cm}{!}{\input{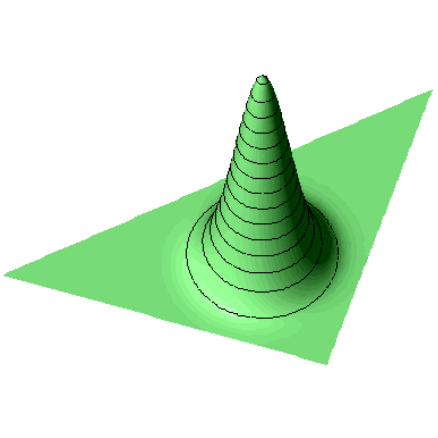_t}}\hspace{1.2cm}
\resizebox{2.4cm}{!}{\input{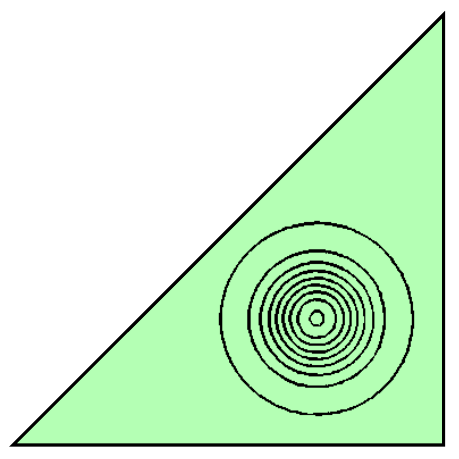_t}}\caption{The function $f$, given by~(\ref{f4}), plotted over the domain $F(S_2^{\mathrm{aff}})$.}\label{Ff}
\end{figure}

We sample the function $f$ on the grids $L^-_{0,\frac{1}{2},4,1}$, $L^-_{0,\frac{1}{2},7,1}$ and $L^-_{0,\frac{1}{2},12,1}$ and calculate the antisymmetric interpolation functions $\psi^-_4$, $\psi^-_7$ and $\psi^-_{12}$ from the relations~(\ref{atrig}) and~(\ref{ainterodd}),~(\ref{aintereven}). These interpolating functions, together with the interpolating grids, are depicted in Figure~\ref{FfEm}. Interpolation errors are summarized in Table~\ref{tabcomp}.

\begin{figure}[!ht]
\resizebox{2.4cm}{!}{\input{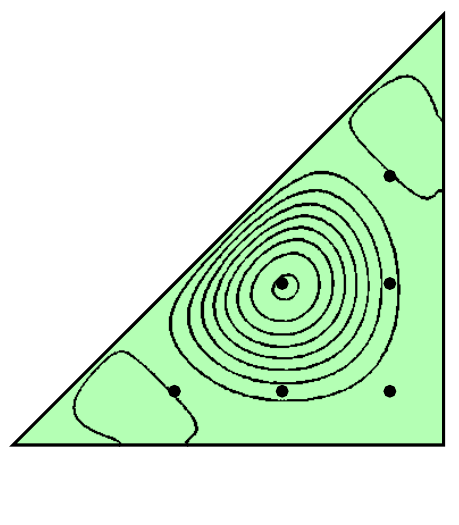_t}}\hspace{1.3cm}
\resizebox{2.4cm}{!}{\input{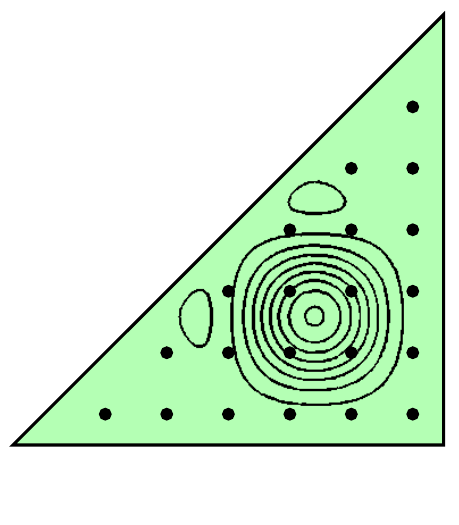_t}}\hspace{1.3cm}
\resizebox{2.4cm}{!}{\input{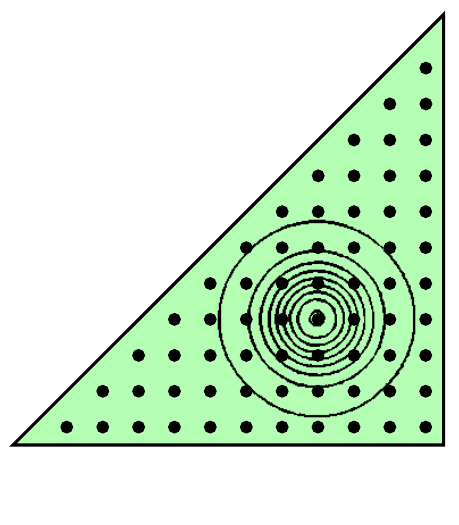_t}}
\\\vspace{1pt}
\resizebox{3.2cm}{!}{\input{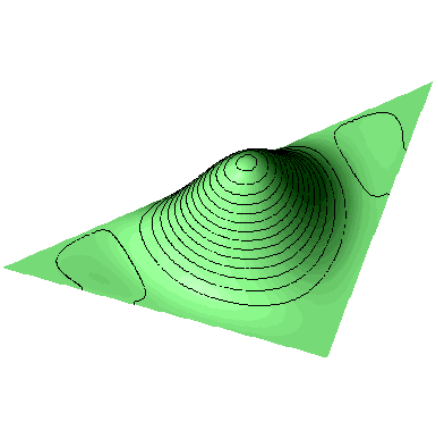_t}}\hspace{14pt}
\resizebox{3.2cm}{!}{\input{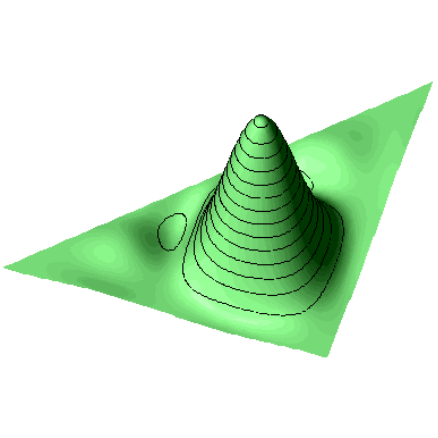_t}}\hspace{14pt}
\resizebox{3.2cm}{!}{\input{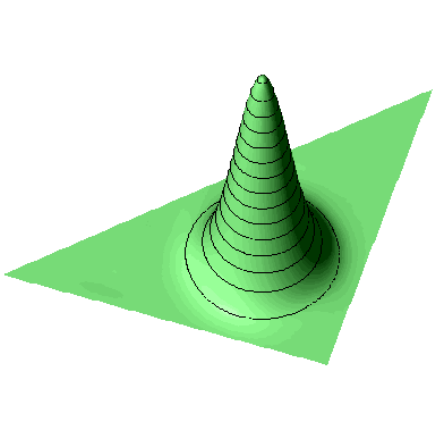_t}}
\caption{The antisymmetric interpolating trigonometric functions $\psi^-_4$, $\psi^-_7$ and $\psi^-_{12}$ of~$f$, given by (\ref{f4}). The points of the interpolation grids $L^-_{0,\frac{1}{2},4,1}$, $L^-_{0,\frac{1}{2},7,1}$ and $L^-_{0,\frac{1}{2},12,1}$ are depicted as small black dots.}\label{FfEm}
\end{figure}

\subsection{Symmetric interpolation}\

For interpolation with symmetric exponential functions, we consider the 'closed' triangle  $K_{[a,a+1]}^+$
inside the square $K_{[a,a']}$:
\begin{equation}\label{sfunds}
K_{[a,a+1]}^+= \set{(x,y)\in[a,a+1]\times[a,a+1]}{
x\geq  y}.
\end{equation}
For a given function $f:K_{[a,a+1]}^+\map \Com$ and a set of points $L^+_{a,b,N,1}\subset K_{[a,a+1]}^+$, we define an {\it symmetric interpolating function}
\begin{equation}\label{strig}
    \psi^+_{N}(x,y)=\sum_{\setcomb{k,l=-M}{k\geq l}}^{M}c^+_{kl}E^+_{(k,l)}(x,y),\q x,y\in \R
\end{equation}
satisfying
\begin{equation}\label{strig2}
    \psi^+_{N}(x_m,y_n)=f(x_m,y_n), \q m\geq n,\,  m,n=0\dots N-1.
\end{equation}

For $N=2M+1$, the antisymmetric interpolating polynomial $\psi^+_{N}$ has $N(N+1)/2$ unknown coefficients $c_{kl}$, to which correspond $N(N+1)/2$ constraints (\ref{strig2}). For $N=2M$, we assume further $2M+1$ conditions
\begin{equation}\label{sassum}
 c^+_{l,-M}= \tau_N c^+_{M,l},\q l=-M\dots M
\end{equation}
and we have $(2M+1)(2M+2)/2-(2M+1)=N(N+1)/2$ unknown coefficients $c^+_{kl}$ corresponding to $N(N+1)/2$ constraints (\ref{strig2}).
For the function $f:K_{[a,a+1]}^+\map \Com$, we define the symmetric extension $Sf:K_{[a,a+1]}\map \Com$ by the formula
\begin{equation}\label{sext}
 Sf(x,y)=\begin{cases}
f(x,y) & x\geq y\\
f(y,x) & x<y.
\end{cases}
\end{equation}
Using symmetric extension, one can prove the following proposition, similarly to Proposition \ref{aip}.
\begin{tvr}
There exists a unique symmetric interpolating polynomial (\ref{strig}) satisfying (\ref{strig2}). The coefficients $c^+_{kl}$ are given for $N=2M+1$ by
\begin{equation}\label{sinterodd}
c^+_{kl}=\frac{1}{G_{kl}N^2}\sum_{\setcomb{m,n=0}{m\geq n}}^{N-1}G_{mn}^{-1}f(x_m,y_n)\overline{E^+_{(k,l)}(x_m,y_n)}
\end{equation}
and for $N=2M$ assuming (\ref{sassum}) by
\begin{equation}\label{sintereven}
 c^+_{kl}=\frac{g_{k,M}g_{l,M}}{G_{kl}N^2}\sum_{\setcomb{m,n=0}{m\geq n}}^{N-1}G_{mn}^{-1}f(x_m,y_n)\overline{E^+_{(k,l)}(x_m,y_n)}.
\end{equation}
\end{tvr}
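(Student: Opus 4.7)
The plan is to mirror the proof of Proposition \ref{aip}, replacing the antisymmetric extension by the symmetric one and carefully tracking the diagonal contributions that no longer vanish. First, using the extension $Sf:K_{[a,a+1]}\map\Com$ defined in (\ref{sext}), the general two-dimensional trigonometric interpolation of Section~3.1 provides a unique trigonometric polynomial $S\psi_N$ with coefficients $(Sc)_{kl}$ furnished by (\ref{interodd}) or (\ref{intereven}), supplemented for $N=2M$ by the constraints (\ref{assum}).

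Second, the symmetry $Sf(x,y)=Sf(y,x)$ together with $x_m=y_m$ forces $(Sc)_{kl}=(Sc)_{lk}$. I split the double sum in $S\psi_N$ into the three contributions $k>l$, $k=l$, $k<l$, and relabel the last one via the coefficient symmetry. Off the diagonal one recognizes $e^{2\pi\i k x}e^{2\pi\i l y}+e^{2\pi\i l x}e^{2\pi\i k y}=E^+_{(k,l)}(x,y)$, while on the diagonal $E^+_{(k,k)}(x,y)=2e^{2\pi\i k(x+y)}$. The doubling on the diagonal is exactly compensated by $G_{kk}=2$, so that the identification $c^+_{kl}=G_{kl}^{-1}(Sc)_{kl}$ for $k\geq l$ recasts $S\psi_N$ in the form (\ref{strig}).

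Third, an analogous folding is applied on the grid side. Splitting the sum defining $(Sc)_{kl}$ into $m>n$, $m=n$, $m<n$ and swapping $m\leftrightarrow n$ in the last, together with the identity $Sf(x_n,y_m)=f(x_m,y_n)$ for $m>n$ (which uses $x_k=y_k$ and the definition of $Sf$), collapses the $m<n$ contribution onto the $m>n$ contribution and produces $\overline{E^+_{(k,l)}(x_m,y_n)}$. The residual $m=n$ terms enter with a factor $\tfrac12=G_{mm}^{-1}$, so that the two pieces unite into a single sum over $m\geq n$. Combining this with the $G_{kl}^{-1}$ from the previous step yields (\ref{sinterodd}) for $N=2M+1$; inserting the $g_{k,M}g_{l,M}$ weights from (\ref{intereven}) yields (\ref{sintereven}) for $N=2M$. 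For $N=2M$ one additionally verifies that the two lines of (\ref{assum}) collapse under the coefficient symmetry to the single line (\ref{sassum}).

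Uniqueness then follows exactly as in Proposition \ref{aip}: two distinct symmetric interpolating polynomials (\ref{strig}) satisfying (\ref{strig2}) would lift via $S$-extension to two distinct trigonometric interpolants of $Sf$ on the full grid, contradicting the uniqueness from Section~3.1. The step I expect to be the main bookkeeping obstacle is precisely the diagonal accounting, because in the symmetric setting the $k=l$ and $m=n$ contributions are nonzero and must be weighted by $G_{kl}^{-1}$ and $G_{mn}^{-1}$; these are exactly the factors that distinguish (\ref{sinterodd})--(\ref{sintereven}) from their antisymmetric counterparts (\ref{ainterodd})--(\ref{aintereven}).
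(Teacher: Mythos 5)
Your proposal is correct and follows exactly the route the paper intends: the paper omits this proof, remarking only that it goes ``similarly to Proposition \ref{aip}'' via the symmetric extension $Sf$, and your argument is a faithful filling-in of that, with the diagonal bookkeeping (the $k=l$ and $m=n$ terms absorbing the factors $G_{kl}^{-1}$ and $G_{mn}^{-1}$) handled correctly.
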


\subsubsection{Calculation of the coefficients $c^+_{kl}$}\

Instead of the direct calculation of coefficients $c^+_{kl}$, the symmetric discrete Fourier transform (\ref{sbetas}) can be used, and the resulting coefficients $\beta^+_{kl}$ transformed to $c^+_{kl}$'s. By direct comparison of (\ref{sinterodd}), (\ref{sintereven}) to (\ref{sbetas}), we obtain for $N=2M+1$
\begin{equation}
\begin{split}
c^+_{kl}=\beta^+_{kl},\q &k\geq l, k,l=0\dots M \\
c^+_{k,-l}=\tau_N\beta^-_{N-l,k},\q & k=0\dots M,\,l=1\dots M\\
c^-_{-k,-l}=\tau_N^2\beta^-_{N-k,N-l}, \q &k\leq l,\, k,l=1\dots M
\end{split}
\end{equation}
and for $N=2M$
\begin{equation}\label{sbetastoceven}
\begin{split}
c^+_{kl}=g_{k,M}g_{l,M}\beta^+_{kl},\q &k\geq l,\,k,l=0\dots M \\
c^+_{k,-l}=g_{k,M}\tau_N\beta^+_{N-l,k},\q & k=0\dots M,\,l=1\dots M-1\\
c^+_{-k,-l}=\tau_N^2\beta^+_{N-k,N-l}, \q &k\leq l,\, k,l=1\dots M-1
\end{split}
\end{equation}
The formula (\ref{sbetastoceven}) determines $(2M)(2M+1)/2$ coefficients; the rest of the $2M+1$ coefficients are determined via the relations (\ref{aassum}).

\subsubsection{Trigonometric form of $\psi^+_N(x,y)$}\

The symmetric interpolating polynomial (\ref{strig}) can be brought to its trigonometric form:
\begin{equation}\label{sbase}
\begin{split}
    \psi^+_N(x,y)=&\sum_{\setcomb{k,l=0}{k\geq l}}^{M}h_{k}h_{l}G_{kl}^{-1}[A^+_{kl}(\cos 2\pi k x \cos 2\pi ly+\cos 2\pi l x \cos 2\pi ky )\\ +&B^+_{kl}(\sin 2\pi k x \cos 2\pi ly+\cos 2\pi l x \sin 2\pi ky)+\\
                +& C^+_{kl}(\cos 2\pi k x \sin 2\pi ly+\sin 2\pi l x \cos 2\pi ky)\\ +&D^+_{kl}(\sin 2\pi k x \sin 2\pi ly+\sin 2\pi l x \sin 2\pi ky)]
\end{split}
\end{equation}
where
\begin{equation}\label{sA}
\begin{split}
    A^+_{kl}=& c^+_{k,l}+c^+_{-k,l}+c^+_{k,-l}+c^+_{-k,-l}\\
    B^+_{kl}=& \i (c^+_{k,l}-c^+_{-k,l}+c^+_{k,-l}-c^+_{-k,-l})\\
    C^+_{kl}=& \i (c^+_{k,l}+c^+_{-k,l}-c^+_{k,-l}-c^+_{-k,-l})\\
    D^+_{kl}=& -c^+_{k,l}+c^+_{-k,l}+c^+_{k,-l}-c^+_{-k,-l}.\\
\end{split}
\end{equation}
Substituting equations (\ref{ainterodd}), (\ref{aintereven}) into (\ref{aA}) we obtain the following explicit formulas
for $N=2M+1$
\begin{equation}\label{sAe}
\begin{split}
    A^+_{kl}=&\frac{4}{G_{kl}N^2}\sum_{\setcomb{m,n=0}{m\geq n}}^{N-1}G_{mn}^{-1}f(x_m,y_n)  (\cos 2\pi k x_m \cos 2\pi ly_n+\cos 2\pi l x_m \cos 2\pi ky_n )\\
    B^+_{kl}=& \frac{4}{G_{kl}N^2}\sum_{\setcomb{m,n=0}{m\geq n}}^{N-1}G_{mn}^{-1}f(x_m,y_n)(\sin 2\pi k x_m \cos 2\pi ly_n+\cos 2\pi l x_m \sin 2\pi ky_n)\\
    C^+_{kl}=& \frac{4}{G_{kl}N^2}\sum_{\setcomb{m,n=0}{m\geq n}}^{N-1}G_{mn}^{-1}f(x_m,y_n)(\cos 2\pi k x_m \sin 2\pi ly_n+\sin 2\pi l x_m \cos 2\pi ky_n)\\
    D^+_{kl}=&\frac{4}{G_{kl}N^2}\sum_{\setcomb{m,n=0}{m\geq n}}^{N-1}G_{mn}^{-1}f(x_m,y_n) (\sin 2\pi k x_m \sin 2\pi ly_n+\sin 2\pi l x_m \sin 2\pi ky_n)\\
\end{split}
\end{equation}
and for $N=2M$
\begin{equation}\label{sAe2}
\begin{split}
    A^+_{kl}=&\frac{4g_{k,M}g_{l,M}}{G_{kl}N^2}\sum_{\setcomb{m,n=0}{m\geq n}}^{N-1}G_{mn}^{-1}f(x_m,y_n)  (\cos 2\pi k x_m \cos 2\pi ly_n+\cos 2\pi l x_m \cos 2\pi ky_n )\\
    B^+_{kl}=& \frac{4g_{k,M}g_{l,M}}{G_{kl}N^2}\sum_{\setcomb{m,n=0}{m\geq n}}^{N-1}G_{mn}^{-1}f(x_m,y_n)(\sin 2\pi k x_m \cos 2\pi ly_n+\cos 2\pi l x_m \sin 2\pi ky_n)\\
    C^+_{kl}=& \frac{4g_{k,M}g_{l,M}}{G_{kl}N^2}\sum_{\setcomb{m,n=0}{m\geq n}}^{N-1}G_{mn}^{-1}f(x_m,y_n)(\cos 2\pi k x_m \sin 2\pi ly_n+\sin 2\pi l x_m \cos 2\pi ky_n)\\
    D^+_{kl}=&\frac{4g_{k,M}g_{l,M}}{G_{kl}N^2}\sum_{\setcomb{m,n=0}{m\geq n}}^{N-1}G_{mn}^{-1}f(x_m,y_n) (\sin 2\pi k x_m \sin 2\pi ly_n+\sin 2\pi l x_m \sin 2\pi ky_n)\\
\end{split}
\end{equation}

\subsubsection{Example of symmetric interpolation}\

We sample the function $f$, given by (\ref{f4}), on the grids $L^+_{0,\frac{1}{2},4,1}$, $L^+_{0,\frac{1}{2},7,1}$ and $L^+_{0,\frac{1}{2},12,1}$, and calculate the symmetric interpolation functions $\psi^+_4$, $\psi^+_7$ and $\psi^+_{12}$ from the relations~(\ref{strig}) and~(\ref{sinterodd}),~(\ref{sintereven}). These interpolating functions, together with the interpolating grids, are depicted in Figure~\ref{FfEp}. Interpolations errors are summarized in Table~\ref{tabcomp}.

\begin{figure}[!ht]
\resizebox{2.4cm}{!}{\input{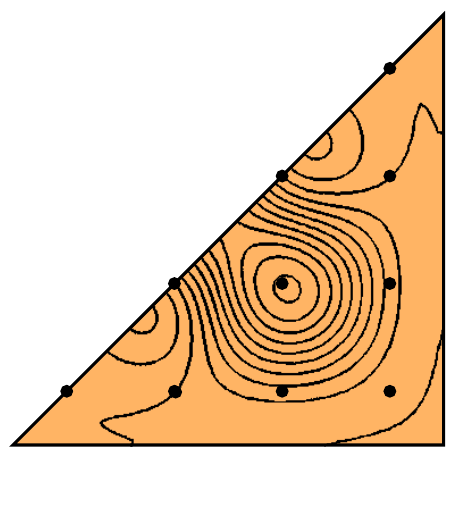_t}}\hspace{1.3cm}
\resizebox{2.4cm}{!}{\input{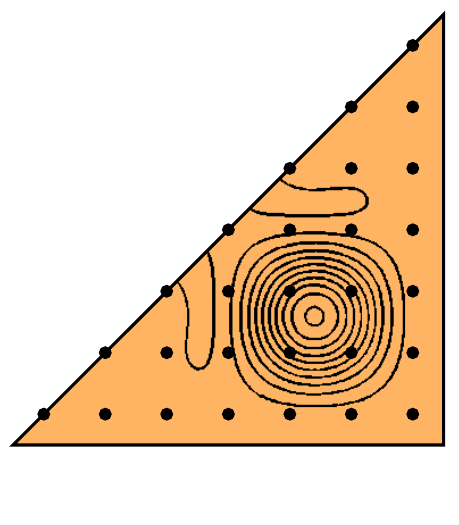_t}}\hspace{1.3cm}
\resizebox{2.4cm}{!}{\input{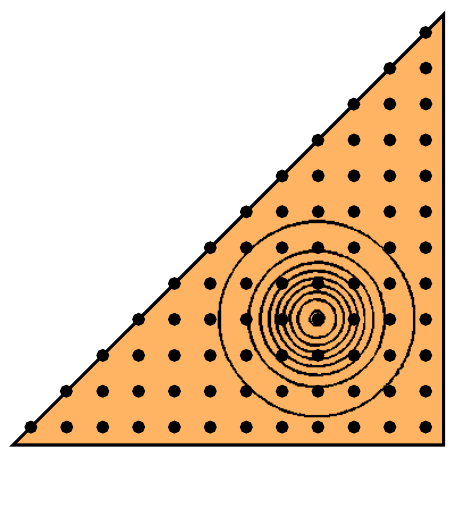_t}}
\\\vspace{1pt}
\resizebox{3.2cm}{!}{\input{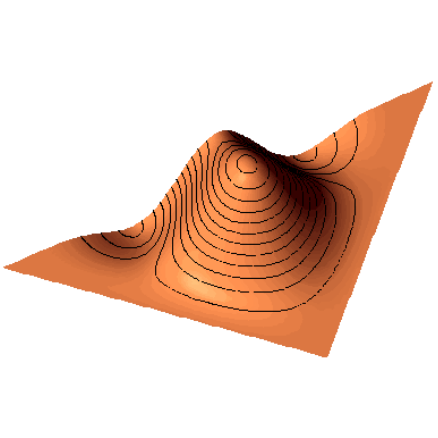_t}}\hspace{14pt}
\resizebox{3.2cm}{!}{\input{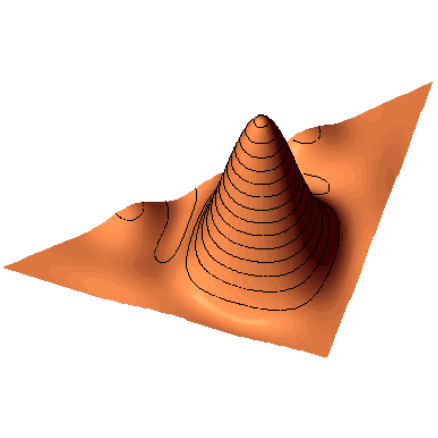_t}}\hspace{14pt}
\resizebox{3.2cm}{!}{\input{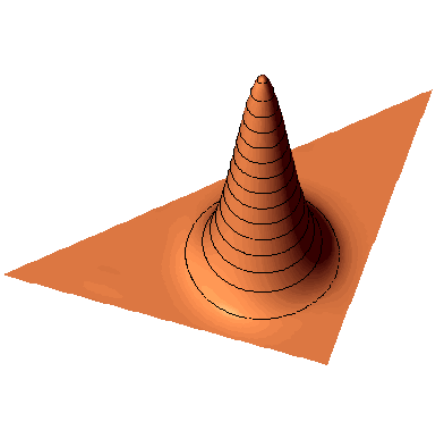_t}}
\caption{The symmetric interpolating trigonometric functions $\psi^+_4$, $\psi^+_7$ and $\psi^+_{12}$ of~$f$, given by (\ref{f4}). The points of the interpolation grids $L^+_{0,\frac{1}{2},4,1}$, $L^+_{0,\frac{1}{2},7,1}$ and $L^+_{0,\frac{1}{2},12,1}$ are depicted as small black dots.}\label{FfEp}
\end{figure}

\section{(Anti)symmetric cosine transforms}

\subsection{Antisymmetric cosine functions}

\subsubsection{Definitions, symmetries and general properties}\

Two-dimensional antisymmetric cosine functions ${\cos}^-_{(\lambda,\mu)}:\R^2\map \Com$ have for $\lambda,\mu\in \R$ the following explicit form
\begin{equation*}
{\cos}^-_{(\lambda,\mu)}(x,y)
     =\left|\begin{smallmatrix}
     \cos(\pi\lambda x)&\cos(\pi\lambda y)\\
     \cos(\pi\mu x)&\cos(\pi\mu y)\\
     \end{smallmatrix}\right|
     =\cos(\pi\lambda x)\cos(\pi\mu y)
     -\cos(\pi\mu x)\cos(\pi\lambda y)
\end{equation*}
Note that we chose this definition according to Section X of~\cite{KPtrig}. Instead of the factor $2\pi$ we use 'half' argument $\pi$.We observe that ${\cos}^-_{(\lambda,\lambda)}(x,y)=0$ and ${\cos}^-_{(\lambda,\mu)}(x,x)=0$. From the explicit formula, we immediately obtain antisymmetry of ${\cos}^-_{(\lambda,\mu)}(x,y)$ with respect to the permutation of variables~$(x,y)$
\begin{equation}\label{cosant}
 {\cos}^-_{(\lambda,\mu)}(y,x)=-{\cos}^-_{(\lambda,\mu)}(x,y).
\end{equation}
and, moreover, with respect to the permutation of $(\la,\mu)$
\begin{equation}\label{cosant2}
{\cos}^-_{(\lambda,\mu)}(x,y)= -{\cos}^-_{(\mu,\lambda)}(x,y).
\end{equation}
Therefore, we consider only such ${\cos}^-_{(\lambda,\mu)}$ with strictly dominant $(\lambda,\mu)$, $\la>\mu$. The functions ${\cos}^-_{(k,l)}$ with $k,l\in\Z$ have symmetries related to the periodicity of cosine function
\begin{equation}\label{acosper}
    {\cos}^-_{(k,l)}(x+2r,y+2s)= {\cos}^-_{(k,l)}(x,y),\q r,s\in \Z
\end{equation}
We also have invariance under the change of sign of $(x,y)$
\begin{equation}\label{acossign}
    {\cos}^-_{(k,l)}(-x,y)= {\cos}^-_{(k,l)}(x,-y)={\cos}^-_{(k,l)}(-x,-y)={\cos}^-_{(k,l)}(x,y)
\end{equation}
and invariance under the change of sign of $(k,l)$
\begin{equation}\label{klacossign}
    {\cos}^-_{(-k,l)}(x,y)= {\cos}^-_{(k,-l)}(x,y)={\cos}^-_{(-k,-l)}(x,y)={\cos}^-_{(k,l)}(x,y).
\end{equation}
The relations (\ref{cosant}) -- (\ref{klacossign}) imply that it is sufficient to consider the functions ${\cos}^-_{(k,l)},\,k,l\in\Z^{\geq 0},\,k>l$ on the fundamental domain $F(S_2^{\mathrm{aff}})$.

\subsubsection{Continuous orthogonality}\

The functions ${\cos}^-_{(k,l)}$ are mutually orthogonal on the fundamental domain $F(S_2^{\mathrm{aff}})$, i.e.,
\begin{equation*}
    \int_{F(S_2^{\mathrm{aff}})} {\cos}^-_{(k,l)}(x,y)
\overline{{\cos}^-_{(k',l')}(x,y)}\,dx\, dy.=
\frac{1}{4}\delta_{kk'}\delta_{ll'}, \q k,l,k',l' \in\Z^{\geq 0},\, k>l,k'>l'.  \end{equation*}
Also every function $f:\R^2\map \Com$ that is antisymmetric $f(x,y)=-f(y,x)$ and periodic $f(x+2r,y+2s)= f(x,y),\, r,s\in \Z$ and has continuous derivatives can be expanded in the antisymmetric cosine functions ${\cos}^-_{(k,l)}$:
\begin{equation}
f(x,y)=\sum_{\setcomb{k,l\in \Z^{\geq 0}}{k>l}} {\wt c}_{kl} {\cos}^-_{(k,l)}(x,y)
,\q
{\wt c}_{kl} = 4 \int_{F(S^{\mathrm{aff}}_2)} f(x,y)
{\cos}^-_{(k,l)}(x,y)\,dx\, dy.
 \end{equation}

The graphs of the lowest antisymmetric cosine functions ${\cos}^-_{(k,l)}$,
$k,l\in\{0,\dots,3\},\,k>l $ are plotted in Figure~\ref{FCm}.

\begin{figure}[!ht]
\resizebox{2.4cm}{!}{\input{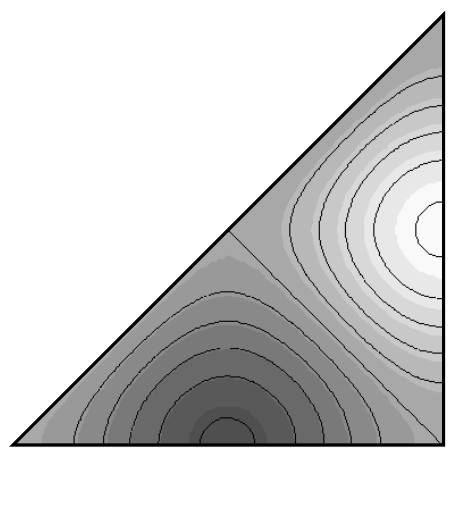_t}}\hspace{22pt}
\resizebox{2.4cm}{!}{\input{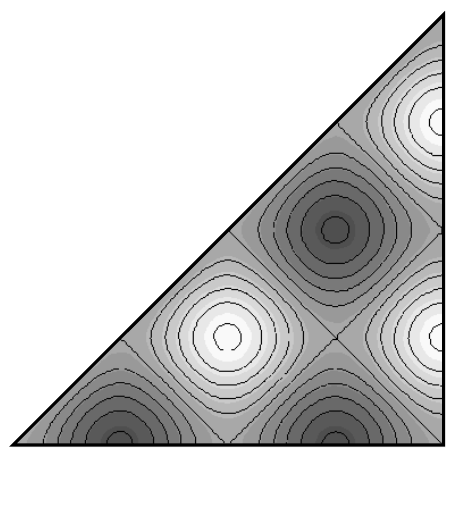_t}}\hspace{22pt}
\resizebox{2.4cm}{!}{\input{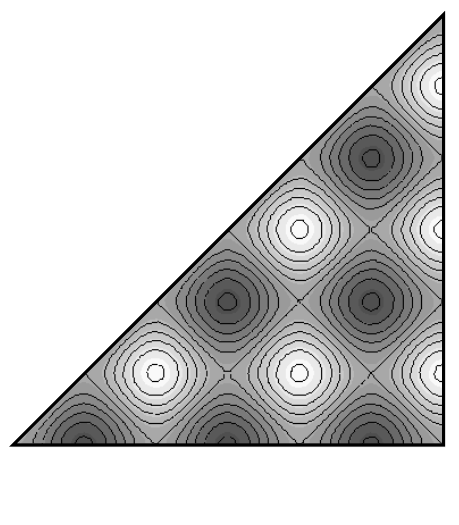_t}}
\\\vspace{2pt}
\resizebox{2.4cm}{!}{\input{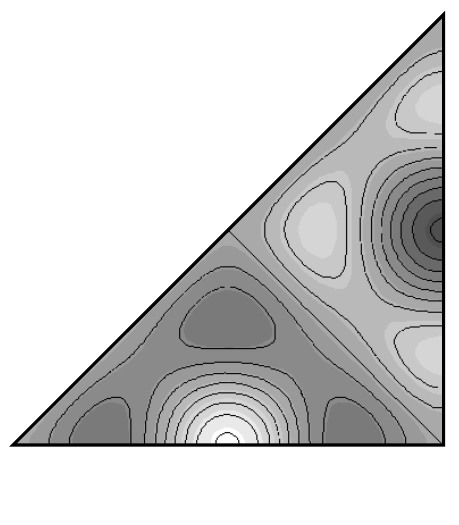_t}}\hspace{22pt}
\resizebox{2.4cm}{!}{\input{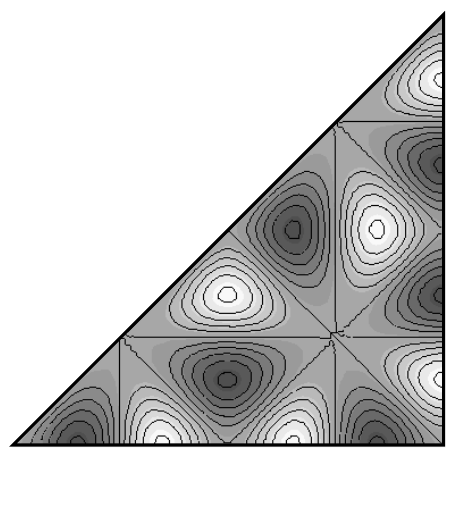_t}}\hspace{22pt}
\resizebox{2.4cm}{!}{\input{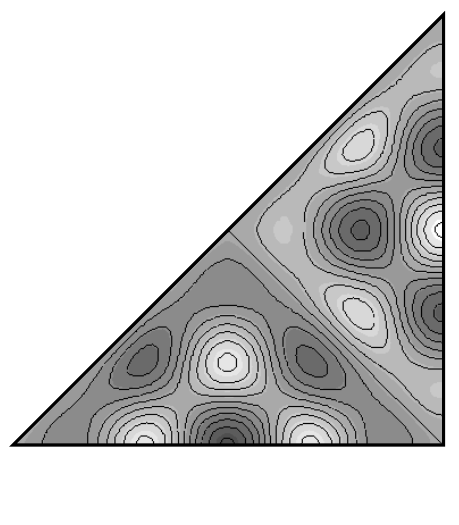_t}} \\ \caption{The contour plots of the antisymmetric cosine functions ${\cos}^-_{(k,l)}$,
$k,l\in\{0,\dots,3\},\,k>l. $}\label{FCm}
\end{figure}

\subsubsection{Solutions of the Laplace equation}\

The functions $\cos^-$ are solutions of the Laplace equation
\begin{equation}\label{laplaccm}
\left(\frac{\partial^2}{\partial x^2}+\frac{\partial^2}{\partial y^2}\right)\cos^-_{(k,l)}(x,y)=-\pi^2(k^2+l^2)\cos^-_{(k,l)}(x,y)
\end{equation}
and moreover of the equation
\begin{equation*}
\frac{\partial^2}{\partial x^2}\frac{\partial^2}{\partial y^2}\cos^-_{(k,l)}(x,y)=\pi^4k^2l^2\cos^-_{(k,l)}(x,y).
\end{equation*}
The functions $\cos^-$ satisfy the condition $\cos^-_{(k,l)}(x,y)=0$ on the boundary $x=y$ and the condition $$\frac{\partial}{\partial \mathbf{n} }\cos^-_{(k,l)}(x,y)=0,$$ where $\mathbf{n}$ is the normal to the boundary $x=1$ or $y=0$.

\subsection{Antisymmetric discrete cosine transforms}\

Four types of discrete antisymmetric cosine transforms can be derived from the antisymmetric trigonometric transform. In order to derive these transforms, we define the following two functional operators. To a function $f:K_{[0,L]}  \map\Com$ we assign a function $E_L f:K_{[-L,L]} \map \Com $ defined by \begin{equation}\label{extE}
          E_Lf(x,y):=  \begin{cases} f(x,y) & x\geq 0,y\geq0 \\ f(-x,y) & x<0,y>0 \\ f(x,-y) & x>0,y<0 \\ f(-x,-y) & x\leq 0,y\leq 0 ,\end{cases}
           \end{equation}

and to a function $f:K_{[0,1]} \map \Com $, we assign a function $Rf:K_{[0,2]} \map \Com $ defined by
\begin{equation}
          Rf(x,y):=  \begin{cases} f(x,y) & 0 \leq x \leq 1,0 \leq y \leq 1 \\ -f(2-x,y) & 1 < x \leq 2,0 \leq y \leq 1 \\ -f(x,2-y) & 0 \leq x \leq 1,1< y \leq 2 \\ f(2-x,2-y) & 1 < x \leq 2,1 < y \leq 2 .\end{cases}
\end{equation}

All four antisymmetric cosine transforms operate on a function $f:K_{[0,1]}^-\map \Com$. Applying the formula (\ref{aAe2})
to the four functions
\begin{enumerate}[(I)]
 \item  $E_1Af:K_{[-1,1]} \map \Com $, where $N=2M$, $T=2$, $b=1$
 \item  $E_1Af:K_{[-1,1]} \map \Com $, where $N=2M$, $T=2$, $b=1/2$
 \item  $E_2RA\, f:K_{[-2,2]} \map \Com $, where $N=4M$, $T=4$, $b=1$
 \item  $E_2RA\, f:K_{[-2,2]} \map \Com $, where $N=4M$, $T=4$, $b=1/2$
\end{enumerate}
we obtain due to (anti)symmetry in (\ref{base}) that $B^-_{kl}=C^-_{kl}=D^-_{kl}=0$. Introducing the symbol $d_{k,M}$
for $k=0,\dots,M$ by \begin{equation}
d_{k,M}=\begin{cases}\frac12 & \text{if $k=0,M$} \\ 1 & \text{otherwise} \end{cases}
\end{equation}
the four interpolating functions (\ref{base}) corresponding to cases (I)--(IV) and their coefficients (\ref{aAe2}) can be brought to the following form.

\begin{enumerate}[{AMDCT-} I.]
 \item  $$\psi^{\mathrm{I},-}_M(x,y)=\sum_{\setcomb{k,l=0}{k>l}}^{M} c_{k,l}^{\mathrm{I},-}{\cos}^-_{(k,l)}(x,y),\q c_{k,l}^{\mathrm{I},-}=\frac{4d_{k,M} d_{l,M}}{M^2} \sum_{\setcomb{m,n=0}{m>n}}^{M}d_{m,M} d_{n,M}f\left(x_m,y_n\right){\cos}^-_{(k,l)}\left(x_m,y_n\right)$$
where $x_m=\frac{m}{M}$, $y_n=\frac{n}{M}$.
 \item

$$\psi^{\mathrm{II},-}_M(x,y)=\sum_{\setcomb{k,l=0}{k>l}}^{M-1} c_{k,l}^{\mathrm{II},-}{\cos}^-_{(k,l)}(x,y),\q c_{k,l}^{\mathrm{II},-}=\frac{4d_{k,M} d_{l,M}}{M^2} \sum_{\setcomb{m,n=0}{m>n}}^{M-1}f\left(x_m,y_n\right){\cos}^-_{(k,l)}\left(x_m,y_n\right)$$
where $x_m=\frac{m+\frac12}{M}$, $y_n=\frac{n+\frac12}{M}$.

 \item  $$\psi^{\mathrm{III},-}_M(x,y)=\sum_{\setcomb{k,l=0}{k>l}}^{M-1} c_{k,l}^{\mathrm{III},-}{\cos}^-_{(k+\frac12,l+\frac12)}(x,y),$$ $$  c_{k,l}^{\mathrm{III},-}=\frac{4}{M^2} \sum_{\setcomb{m,n=0}{m>n}}^{M-1}d_{m,M}d_{n,M}f\left(x_m,y_n\right){\cos}^-_{(k+\frac12,l+\frac12)}\left(x_m,y_n\right) $$
where $x_m=\frac{m}{M}$, $y_n=\frac{n}{M}$.
 \item  $$\psi^{\mathrm{IV},-}_M(x,y)=\sum_{\setcomb{k,l=0}{k>l}}^{M-1} c_{kl}^{\mathrm{IV},-}{\cos}^-_{(k+\frac12,l+\frac12)}(x,y),\q c_{kl}^{\mathrm{IV},-}=\frac{4}{M^2} \sum_{\setcomb{m,n=0}{m>n}}^{M-1}f\left(x_m,y_n\right){\cos}^-_{(k+\frac12,l+\frac12)}\left(x_m,y_n\right)$$
where $x_m=\frac{m+\frac12}{M}$, $y_n=\frac{n+\frac12}{M}$.
\end{enumerate}

\subsubsection{Example of antisymmetric cosine interpolation}\

We sample the function $f$ on the grids $L^-_{0,\frac{1}{2},4,1}$, $L^-_{0,\frac{1}{2},7,1}$ and $L^-_{0,\frac{1}{2},12,1}$ and calculate the antisymmetric cosine interpolating function of the type AMDCT-II: $\psi^{\mathrm{II},-}_4$, $\psi^{\mathrm{II},-}_7$ and $\psi^{\mathrm{II},-}_{12}$. These interpolating functions, together with the interpolating grids, are depicted in Figure~\ref{FfCm}. Interpolation errors are summarized in Table~\ref{tabcomp}.

\begin{figure}[!ht]
\resizebox{2.4cm}{!}{\input{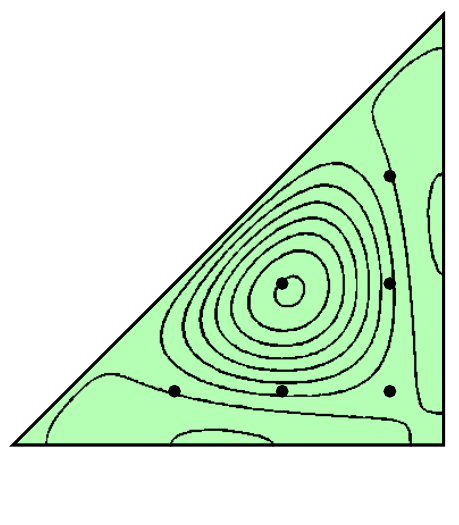_t}}\hspace{1.3cm}
\resizebox{2.4cm}{!}{\input{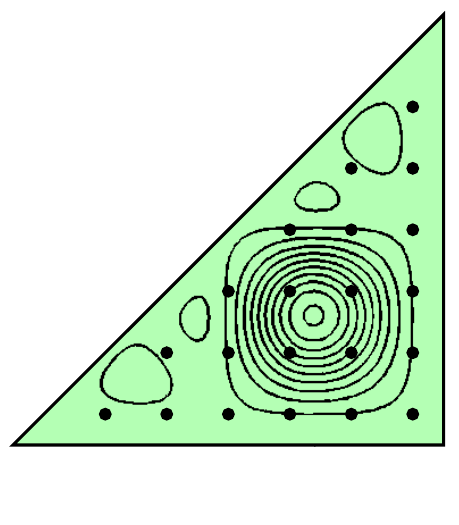_t}}\hspace{1.3cm}
\resizebox{2.4cm}{!}{\input{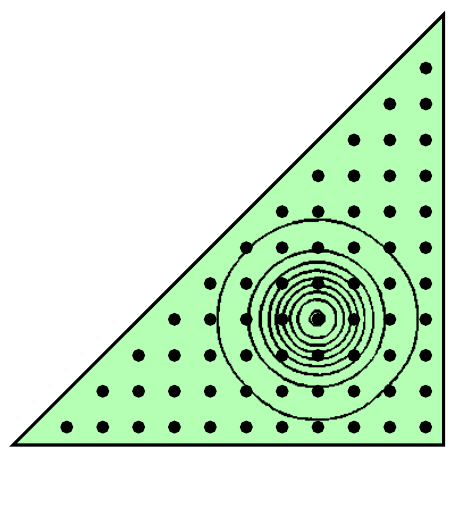_t}}
\\\vspace{1pt}
\resizebox{3.2cm}{!}{\input{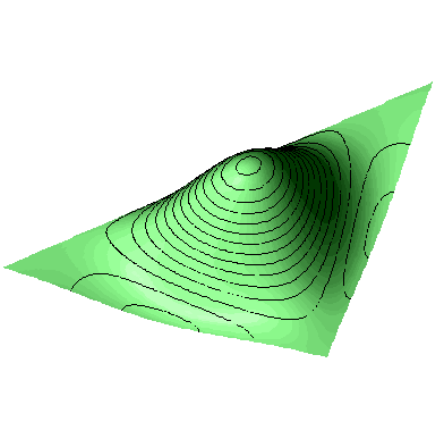_t}}\hspace{14pt}
\resizebox{3.2cm}{!}{\input{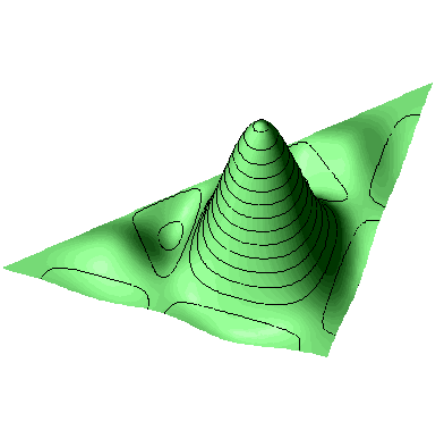_t}}\hspace{14pt}
\resizebox{3.2cm}{!}{\input{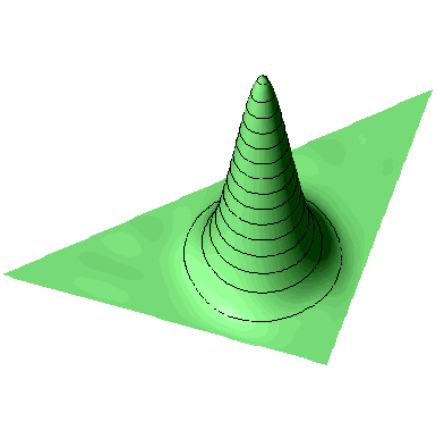_t}}
\caption{The antisymmetric interpolating cosine functions of the type AMDCT-II: $\psi^{\mathrm{II},-}_4$, $\psi^{\mathrm{II},-}_7$ and $\psi^{\mathrm{II},-}_{12}$ of the function $f$, given by (\ref{f4}). The points of the interpolation grids $L^-_{0,\frac{1}{2},4,1}$, $L^-_{0,\frac{1}{2},7,1}$ and $L^-_{0,\frac{1}{2},12,1}$ are depicted as small black dots.}\label{FfCm}
\end{figure}

\subsection{Symmetric cosine functions}

\subsubsection{Definitions, symmetries and general properties}\

Two-dimensional symmetric cosine functions ${\cos}^+_{(\lambda,\mu)}:\R^2\map \Com$ have for $\lambda,\mu\in \R$ the following explicit form
\begin{equation*}
{\cos}^+_{(\lambda,\mu)}(x,y)
     =\left|\begin{smallmatrix}
     \cos(\pi\lambda x)&\cos(\pi\lambda y)\\
     \cos(\pi\mu x)&\cos(\pi\mu y)\\
     \end{smallmatrix}\right|^+
     =\cos(\pi\lambda x)\cos(\pi\mu y)
     +\cos(\pi\mu x)\cos(\pi\lambda y)
\end{equation*}
Note that we chose this definition according to Section X of~\cite{KPtrig}. Instead of the factor $2\pi$ we use 'half' argument $\pi$.
From the explicit formula we immediately obtain symmetry of ${\cos}^+_{(\lambda,\mu)}(x,y)$ with respect to the permutation of variables~$(x,y)$
\begin{equation}\label{cossym}
 {\cos}^+_{(\lambda,\mu)}(y,x)={\cos}^+_{(\lambda,\mu)}(x,y).
\end{equation}
and, moreover, with respect to the permutation of $(\la,\mu)$
\begin{equation}\label{cossym2}
{\cos}^+_{(\lambda,\mu)}(x,y)= {\cos}^+_{(\mu,\lambda)}(x,y).
\end{equation}
Therefore, we consider only such ${\cos}^+_{(\lambda,\mu)}$ with dominant $(\lambda,\mu)$, $\la\geq\mu$. The functions ${\cos}^+_{(k,l)}$ with $k,l\in\Z$ have symmetries related to the periodicity of cosine function
\begin{equation}\label{scosper}
    {\cos}^+_{(k,l)}(x+2r,y+2s)= {\cos}^+_{(k,l)}(x,y),\q r,s\in \Z.
\end{equation}
We also have invariance under the change of sign of variables $(x,y)$
\begin{equation}\label{scossign}
    {\cos}^+_{(k,l)}(-x,y)= {\cos}^+_{(k,l)}(x,-y)={\cos}^+_{(k,l)}(-x,-y)={\cos}^+_{(k,l)}(x,y)
\end{equation}
and under the change of sign of $(k,l)$
\begin{equation}\label{klscossign}
    {\cos}^+_{(-k,l)}(x,y)= {\cos}^+_{(k,-l)}(x,y)={\cos}^+_{(-k,-l)}(x,y)={\cos}^+_{(k,l)}(x,y).
\end{equation}

The relations (\ref{cossym}) -- (\ref{klscossign}) imply that it is sufficient to consider the functions ${\cos}^+_{(k,l)},\,k,l\in\Z^{\geq 0},\,k\geq l$ on the fundamental domain $F(S_2^{\mathrm{aff}})$.

\subsubsection{Continuous orthogonality}\

The functions ${\cos}^+_{(k,l)}$ are mutually orthogonal on the fundamental domain $F(S_2^{\mathrm{aff}})$, i.e.,
\begin{equation*}
    \int_{F(S_2^{\mathrm{aff}})} {\cos}^+_{(k,l)}(x,y)
\overline{{\cos}^+_{(k',l')}(x,y)}\,dx\, dy.=
\frac{G_{kl}}{4}\delta_{kk'}\delta_{ll'}, \q k,l,k',l' \in\Z^{\geq 0},\, k\geq l,k' \geq l'.  \end{equation*}
Also every function $f:\R^2\map \Com$ that is symmetric $f(x,y)=f(y,x)$ and periodic $f(x+2r,y+2s)= f(x,y),\, r,s\in \Z$ and has continuous derivatives can be expanded in the symmetric cosine functions ${\cos}^+_{(k,l)}$:
\begin{equation}
f(x,y)=\sum_{\setcomb{k,l\in \Z^{\geq 0}}{k\geq l}} {\wt c}_{kl} {\cos}^+_{(k,l)}(x,y)
,\q
{\wt c}_{kl} = 4G_{kl}^{-1} \int_{F(S^{\mathrm{aff}}_2)} f(x,y)
{\cos}^+_{(k,l)}(x,y)\,dx\, dy.
 \end{equation}
The graphs of the lowest symmetric cosine functions ${\cos}^+_{(k,l)}$,
$k,l\in\{0,1,2\},\,k\geq l $ are plotted in Figure~\ref{FCp}.
\begin{figure}[!ht]
\resizebox{2.4cm}{!}{\input{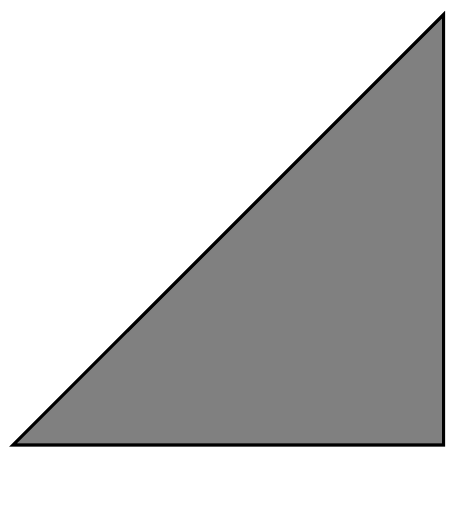_t}}\hspace{22pt}
\resizebox{2.4cm}{!}{\input{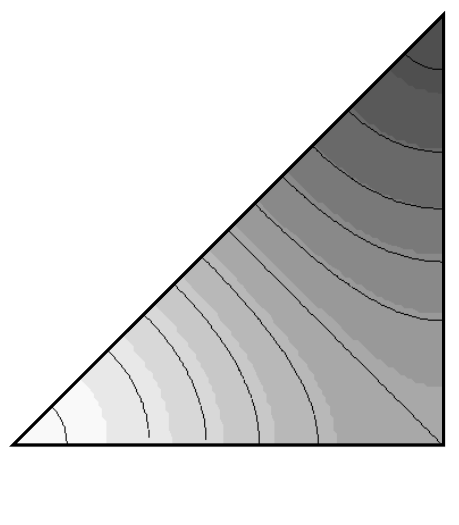_t}}\hspace{22pt}
\resizebox{2.4cm}{!}{\input{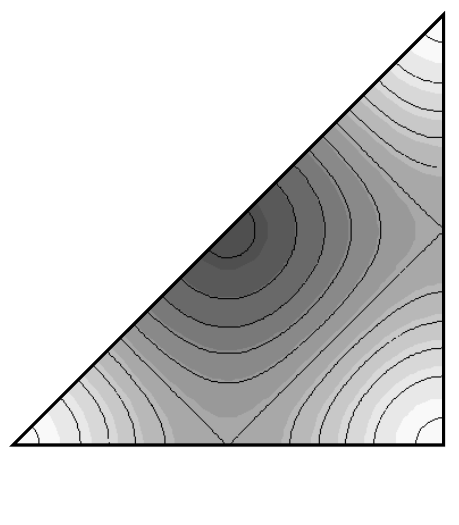_t}}
\\\vspace{2pt}
\resizebox{2.4cm}{!}{\input{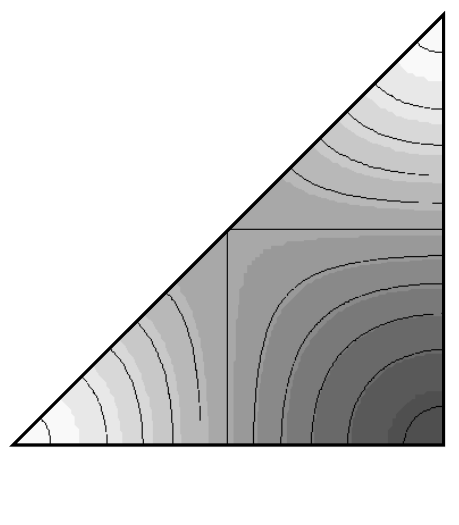_t}}\hspace{22pt}
\resizebox{2.4cm}{!}{\input{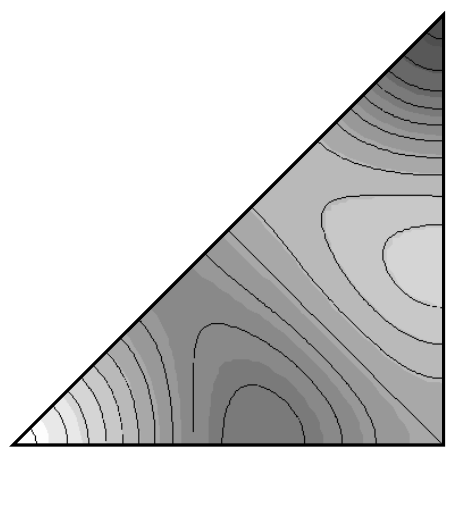_t}}\hspace{22pt}
\resizebox{2.4cm}{!}{\input{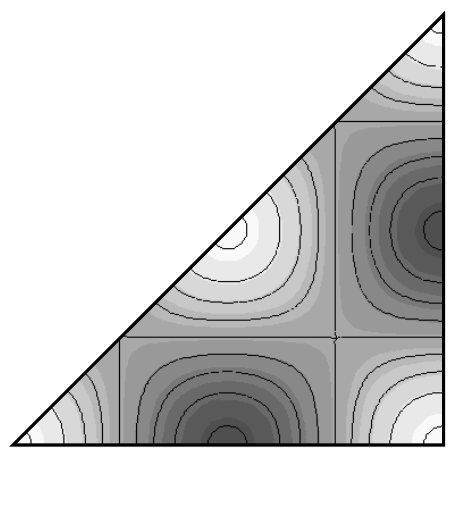_t}} \\ \caption{The contour plots of the symmetric cosine functions ${\cos}^+_{(k,l)}$,
$k,l\in\{0,1,2\},\,k\geq l. $}\label{FCp}
\end{figure}

\subsubsection{Solutions of the Laplace equation}\

The functions $\cos^+$ are solutions of the Laplace equation
\begin{equation*}
\left(\frac{\partial^2}{\partial x^2}+\frac{\partial^2}{\partial y^2}\right)\cos^+_{(k,l)}(x,y)=-\pi^2(k^2+l^2)\cos^+_{(k,l)}(x,y)
\end{equation*}
and of the equation
\begin{equation*}
\frac{\partial^2}{\partial x^2}\frac{\partial^2}{\partial y^2}\cos^+_{(k,l)}(x,y)=\pi^4k^2l^2\cos^+_{(k,l)}(x,y).
\end{equation*}
The functions $\cos^+$ satisfy the condition $$\frac{\partial}{\partial \mathbf{n} }\cos^+_{(k,l)}(x,y)=0,$$ where $\mathbf{n}$ is the normal to the boundary of $F(S^{\mathrm{aff}}_2)$, i.e. are the solutions of the Neumann boundary value problem for $F(S^{\mathrm{aff}}_2)$.

\subsection{Symmetric discrete cosine transforms}\

Four types of discrete symmetric cosine transforms can be derived from the symmetric trigonometric transform.
All four symmetric cosine transforms operate on a function $f:K_{[0,1]}^+\map \Com$. Applying the formula (\ref{sAe2})
to the four functions
\begin{enumerate}[(I)]
 \item  $E_1Sf:K_{[-1,1]} \map \Com $, where $N=2M$, $T=2$, $b=1$
 \item  $E_1Sf:K_{[-1,1]} \map \Com $, where $N=2M$, $T=2$, $b=1/2$
 \item  $E_2RS\, f:K_{[-2,2]} \map \Com $, where $N=4M$, $T=4$, $b=1$
 \item  $E_2RS\, f:K_{[-2,2]} \map \Com $, where $N=4M$, $T=4$, $b=1/2$
\end{enumerate}
we obtain due to symmetry in (\ref{sbase}) that $B^+_{kl}=C^+_{kl}=D^+_{kl}=0$.
The four interpolating functions (\ref{sbase}) corresponding to cases (I)--(IV) and their coefficients (\ref{sAe2}) can be brought to the following form.

\begin{enumerate}[{SMDCT--}I.]
 \item  $$\psi^{\mathrm{I},+}_M(x,y)=\sum_{\setcomb{k,l=0}{k\geq l}}^{M} c_{kl}^{\mathrm{I},+}{\cos}^+_{(k,l)}(x,y),$$ $$ c_{kl}^{\mathrm{I},+}=\frac{4d_{k,M} d_{l,M}}{M^2G_{kl}} \sum_{\setcomb{m,n=0}{m\geq n}}^{M}G_{mn}^{-1}d_{m,M} d_{n,M}f\left(x_m,y_n\right){\cos}^+_{(k,l)}\left(x_m,y_n\right)$$
where $x_m=\frac{m}{M}$, $y_n=\frac{n}{M}$.
 \item

$$\psi^{\mathrm{II},+}_M(x,y)=\sum_{\setcomb{k,l=0}{k\geq l}}^{M-1} c_{k,l}^{\mathrm{II},+}{\cos}^+_{(k,l)}(x,y),\q c_{k,l}^{\mathrm{II},+}=\frac{4d_{k,M} d_{l,M}}{M^2G_{kl}} \sum_{\setcomb{m,n=0}{m\geq n}}^{M-1}G_{mn}^{-1}f\left(x_m,y_n\right){\cos}^+_{(k,l)}\left(x_m,y_n\right)$$
where $x_m=\frac{m+\frac12}{M}$, $y_n=\frac{n+\frac12}{M}$.

 \item  $$\psi^{\mathrm{III},+}_M(x,y)=\sum_{\setcomb{k,l=0}{k\geq l}}^{M-1} c_{k,l}^{\mathrm{III},+}{\cos}^+_{(k+\frac12,l+\frac12)}(x,y),$$ $$  c_{k,l}^{\mathrm{III},+}=\frac{4}{M^2G_{kl}} \sum_{\setcomb{m,n=0}{m\geq n}}^{M-1}d_{m,M}d_{n,M}G_{mn}^{-1}f\left(x_m,y_n\right){\cos}^+_{(k+\frac12,l+\frac12)}\left(x_m,y_n\right) $$
where $x_m=\frac{m}{M}$, $y_n=\frac{n}{M}$.
 \item  $$\psi^{\mathrm{IV},+}_M(x,y)=\sum_{\setcomb{k,l=0}{k\geq l}}^{M-1} c_{kl}^{\mathrm{IV},+}{\cos}^+_{(k+\frac12,l+\frac12)}(x,y),$$ $$ c_{kl}^{\mathrm{IV},+}=\frac{4}{M^2G_{kl}} \sum_{\setcomb{m,n=0}{m\geq n}}^{M-1}G_{mn}^{-1}f\left(x_m,y_n\right){\cos}^+_{(k+\frac12,l+\frac12)}\left(x_m,y_n\right)$$
where $x_m=\frac{m+\frac12}{M}$, $y_n=\frac{n+\frac12}{M}$.
\end{enumerate}

\subsubsection{Two Examples}\

\begin{enumerate}[(1)]
\item
We sample the function $f$ on the grids $L^+_{0,\frac{1}{2},4,1}$, $L^+_{0,\frac{1}{2},7,1}$ and $L^+_{0,\frac{1}{2},12,1}$ and calculate the symmetric cosine interpolating function of type SMDCT-II: $\psi^{\mathrm{II},+}_4$, $\psi^{\mathrm{II},+}_7$ and $\psi^{\mathrm{II},+}_{12}$. These interpolating functions, together with the interpolating grids, are depicted in Figure~\ref{FfCp}. Interpolation errors are summarized in Table~\ref{tabcomp}.

\begin{figure}[!ht]
\resizebox{2.4cm}{!}{\input{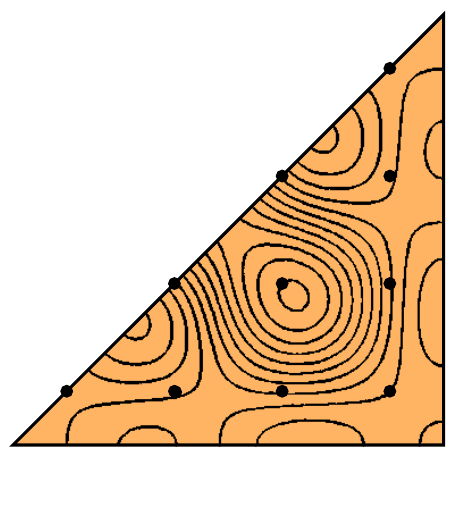_t}}\hspace{1.3cm}
\resizebox{2.4cm}{!}{\input{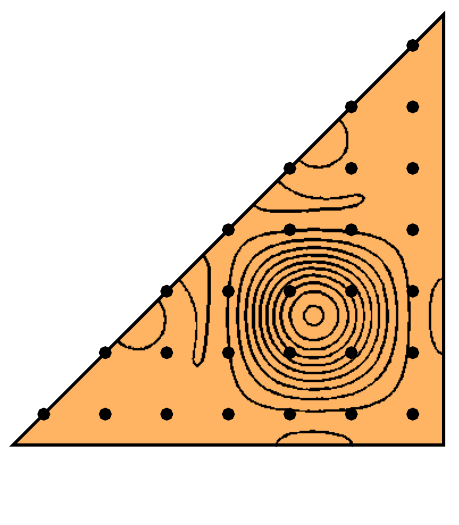_t}}\hspace{1.3cm}
\resizebox{2.4cm}{!}{\input{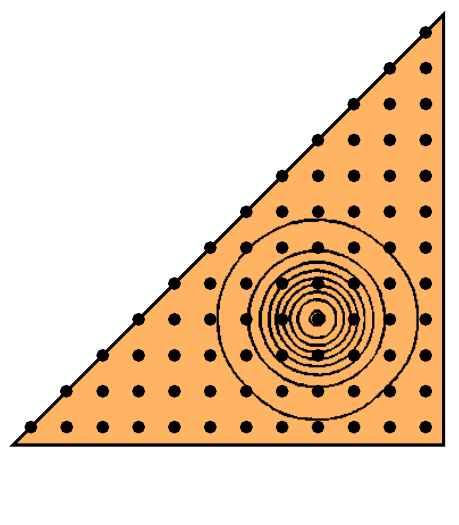_t}}
\\\vspace{1pt}
\resizebox{3.2cm}{!}{\input{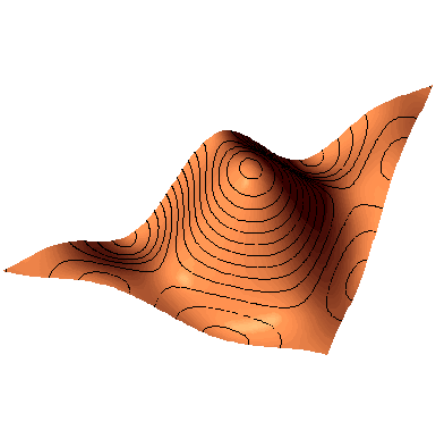_t}}\hspace{14pt}
\resizebox{3.2cm}{!}{\input{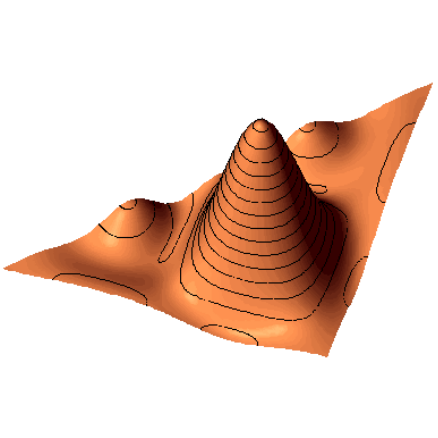_t}}\hspace{14pt}
\resizebox{3.2cm}{!}{\input{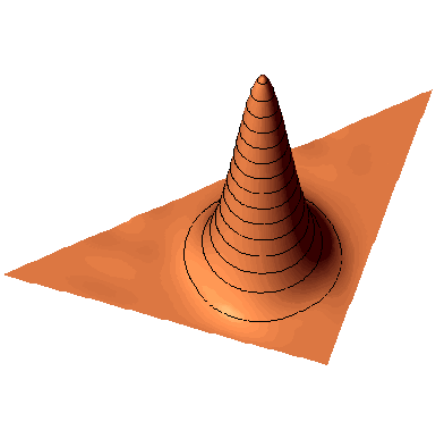_t}}
\caption{The symmetric interpolating cosine functions of type SMDCT-II: $\psi^{\mathrm{II},+}_4$, $\psi^{\mathrm{II},+}_7$ and $\psi^{\mathrm{II},+}_{12}$ of~$f$, given by (\ref{f4}). The points of the interpolation grids $L^+_{0,\frac{1}{2},4,1}$, $L^+_{0,\frac{1}{2},7,1}$ and $L^+_{0,\frac{1}{2},12,1}$ are depicted as small black dots.}\label{FfCp}
\end{figure}

\begin{table}
\begin{tabular}{|c||r||r||r||r|}
\hline
 $N$ & $ \int_{F(S_2^{\mathrm{aff}})} \abs{\psi^-_N- f}^2$  & $ \int_{F(S_2^{\mathrm{aff}})} \abs{\psi^+_N- f}^2$ & $ \int_{F(S_2^{\mathrm{aff}})} \abs{\psi^{\mathrm{II},-}_N- f}^2$ & $ \int_{F(S_2^{\mathrm{aff}})} \abs{\psi^{\mathrm{II},+}_N- f}^2$ \\
\hline\hline
$4$ & $97987\cdot 10^{-7}$ & $97336\cdot 10^{-7}$ & $94170\cdot 10^{-7}$ & $89002\cdot 10^{-7}$ \\ \hline
$5$ & $86234\cdot 10^{-7}$ & $86224\cdot 10^{-7}$ & $77865\cdot 10^{-7}$& $77839\cdot 10^{-7}$\\ \hline
$6$ & $21116\cdot 10^{-7}$ & $21447\cdot 10^{-7}$ & $35708\cdot 10^{-7}$& $35636\cdot 10^{-7}$\\ \hline
$7$ & $9841\cdot 10^{-7}$ & $9812\cdot 10^{-7}$ & $14023\cdot 10^{-7}$& $13915\cdot 10^{-7}$\\ \hline
$8$ & $1949\cdot 10^{-7}$ &  $1978\cdot 10^{-7}$ & $2570\cdot 10^{-7}$& $2570\cdot 10^{-7}$\\ \hline
$9$ & $1000\cdot 10^{-7}$ & $1001\cdot 10^{-7}$ & $1309\cdot 10^{-7}$& $1310\cdot 10^{-7}$\\ \hline
$10$ & $503\cdot 10^{-7}$ & $504\cdot 10^{-7}$ &$600\cdot 10^{-7}$& $601\cdot 10^{-7}$ \\ \hline
$11$ & $63\cdot 10^{-7}$ & $63\cdot 10^{-7}$ & $86\cdot 10^{-7}$& $86\cdot 10^{-7}$\\ \hline
$12$ & $3\cdot 10^{-7}$ & $3\cdot 10^{-7}$ & $11\cdot 10^{-7}$& $11\cdot 10^{-7}$ \\ \hline
\end{tabular}
\medskip

\caption{Comparison of errors of interpolations $\psi^-_N$, $\psi^+_N$, $\psi^{\mathrm{II},-}_N$ and $\psi^{\mathrm{II},+}_N$. The function $f$, given by (\ref{f4}), is sampled on grids $L^\pm_{0,\frac{1}{2},N,1}$, $N=4,\dots,12$. } \label{tabcomp}
\end{table}

\item Symmetric cosine interpolating functions $\psi^{\mathrm{I},+}_N$ and $\psi^{\mathrm{II},+}_N$ offer an additional advantage due to the absence of the so called Gibbs phenomenon alongside the borders of $F(S_2^{\mathrm{aff}})$. The Gibbs phenomenon for one-dimensional Fourier expansions, which describes specific behavior ('ringing') of Fourier expansions at the points of discontinuities of $f$ was extensively studied, see e.g. the detailed review~\cite{Hewitt}. The Gibbs phenomenon in dimension two was also investigated in ~\cite{Ustina,Helmberg}. The analog of the Gibbs phenomenon appears in Fourier interpolating functions~\cite{Helmberg2}. Since the (anti)symmetric and periodic extensions of continuous functions on $F(S_2^{\mathrm{aff}})$ may generate discontinuities of the resulting functions on the borders of $F(S_2^{\mathrm{aff}})$, the Gibbs phenomenon is in general to be expected for the interpolating functions
\begin{itemize}
\item $\psi^-_N$: on the borders $y=0$, $x=1$ and $x=y$
\item $\psi^+_N$: on the borders $y=0$, $x=1$
\item $\psi^{\mathrm{I},-}_N$ and $\psi^{\mathrm{II},-}_N$: on the border $x=y$.
\end{itemize}

Since for the continuous function $f' $ on $\overline{F(S_2^{\mathrm{aff}})}$, the periodic extension of $E_1Sf'$ (see (\ref{sext}),(\ref{extE})) is continuous, the Gibbs phenomenon for the interpolations $\psi^{\mathrm{I},-}_N$ and $\psi^{\mathrm{II},-}_N$ of $f'$ does not occur. This fact is illustrated in Figure~\ref{FfCp2}.

\begin{figure}[!ht]
\resizebox{2.4cm}{!}{\input{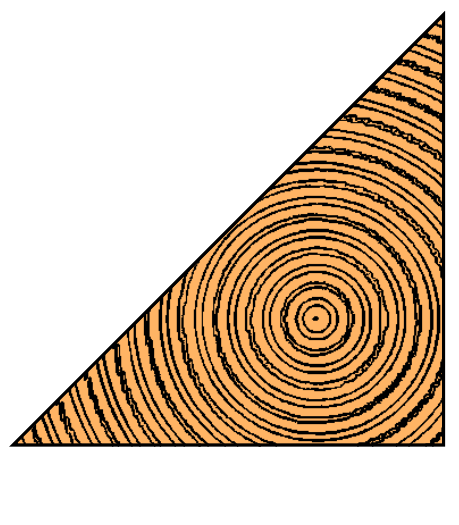_t}}\hspace{1.3cm}
\resizebox{2.4cm}{!}{\input{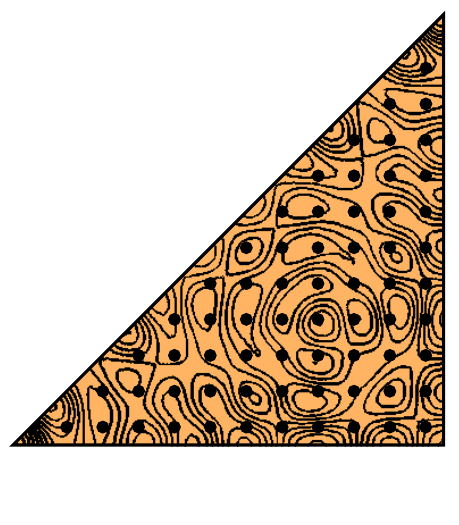_t}}\hspace{1.3cm}
\resizebox{2.4cm}{!}{\input{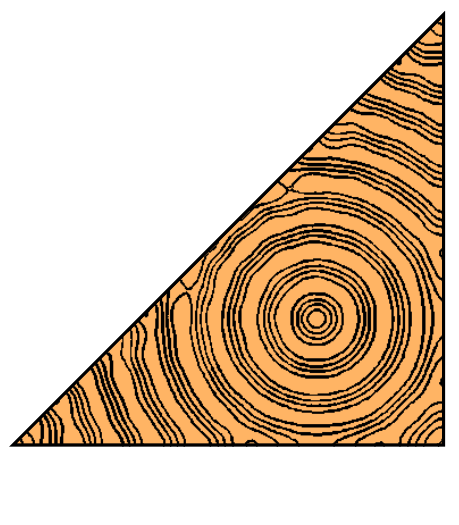_t}}
\\\vspace{1pt}
\resizebox{3.2cm}{!}{\input{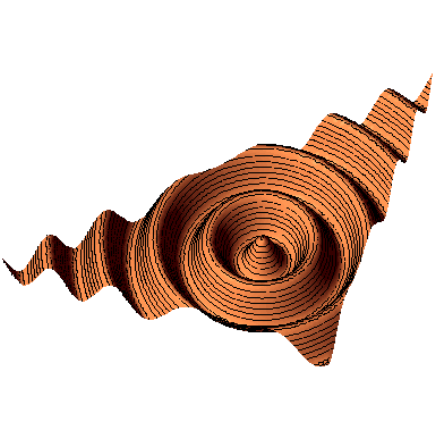_t}}\hspace{14pt}
\resizebox{3.2cm}{!}{\input{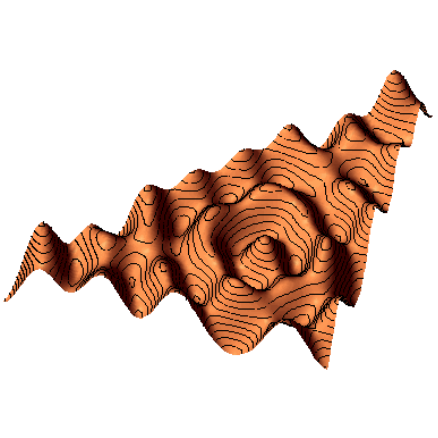_t}}\hspace{14pt}
\resizebox{3.2cm}{!}{\input{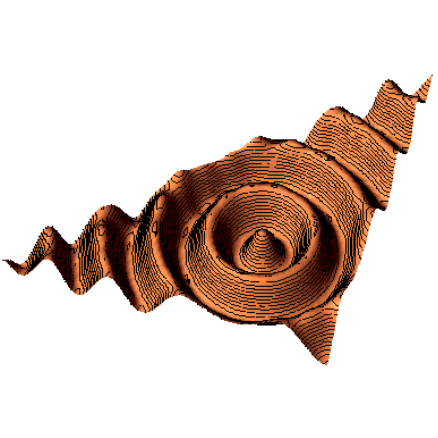_t}}
\caption{The symmetric interpolating cosine functions of type SMDCT-II: $\psi^{\mathrm{II},+}_{12}$, $\psi^{\mathrm{II},+}_{20}$ of the function $f'(x,y)=\frac{1}{15}\cos{40\sqrt{(x-0.707)^2+(y-0.293)^2}}$.}\label{FfCp2}
\end{figure}
\end{enumerate}

\section{Concluding remarks}

When functions of only two variables are considered, the structure of our special functions is quire transparent. For this reason, we avoided any reference to the underlying symmetric group $S_3$ of permutations of three elements \cite{NPT}.

Section~4 contains a description of symmetric and antisymmetric $2D$ transforms built on cosine functions. An analogous presentation of the properties of the transforms built on sine functions would be of interest.

None of the six families of special functions considered in the paper correspond to the well known cosine or sine transforms \cite{Strang,mart}, where one dimensional transforms are used simultaneously in two mutually orthogonal directions. Nevertheless it is justifiable to claim that here we have $2D$ generalizations of common trigonometric functions of one dimension. It is curious to see that both sine and cosine each have a symmetric and antisymmetric generalization.

The conventional cosine transform used in $2D$ comes in four types \cite{Strang}, traditionally labeled I, II, III, and IV. They differ by the shift of the data sampling points with respect to the original lattice points.  The shifting is by a half distance separating the lattice points in either of the two space directions or in both. 

A comparison of theoretical properties of the `cosine transforms of $S_3$' in this paper and those of the standard $S_2\times S_2$ would be of interest and has yet to be made. The arguments in favor of the $S_3$ version of the transforms quote the ease of generalization to any dimension \cite{KPexp} and the possibility to work with data on lattices of other symmetries \cite{NPT}. In general the greater the symmetry group underlying the formalism, the more economies one may expect in some applications. On the down side, there is the need to work with functions situated in domains that are of triangular shape even in square lattices.

Practical computational aspects of the formalism presented in this paper need further investigation. The identification of problems in which it can be most advantageous, a comparison of computation speeds, and modification to the fast Fourier transform, etc. should be addressed.

Visual inspection of the interpolations of the Gaussian model functions in this paper lead to the qualitative conclusion that the interpolation error is rather small once the minimal distances of the lattice grid become smaller than the dispersion of the Gaussian model function.

The orbit functions of the paper have other useful properties that were not exposed here, two of which are:  (i) Their products are decomposable into their finite sums. (ii) They can be naturally rewritten in terms of variables referring to non-orthogonal bases related to the simple roots of the simple Lie group $SU(3)$, see \cite{NPT}.

\section*{Acknowledgements}
Work was supported in part by the Natural Sciences and Engineering Research Council of Canada and by the MIND Research Institute, California. J.~H.~is grateful for the postdoctoral fellowship and the hospitality extended to him at the Centre de recherches math\'ematiques, Universit\'e de Montr\'eal.

\end{document}